\documentclass[10pt,letterpaper]{article}

\usepackage{fullpage}
\usepackage{textcomp}

\usepackage{amsmath, graphicx, epsfig, color, amsfonts, algorithm, amssymb, mathtools, enumitem, amsthm}
\usepackage{color, soul, xcolor}

\usepackage[font=footnotesize]{caption}
\usepackage{subcaption}

\makeatletter
\let\NAT@parse\undefined
\makeatother

\usepackage[round, sort&compress, numbers]{natbib}
\usepackage{float}

\graphicspath{{./figures_arxiv/}}

\usepackage{version,xspace}
\usepackage[noend]{algorithmic}

\def\expt{\mathbb{E}}
\def\real{\mathbb{R}}

\newcommand{\subscr}[2]{#1_{\textup{#2}}}
\newcommand{\supscr}[2]{#1^{\textup{#2}}}

\newcommand\oprocendsymbol{\hbox{$\square$}}
\newcommand\oprocend{\relax\ifmmode\else\unskip\hfill\fi\oprocendsymbol}

\def \bs {\boldsymbol}
\def \mc {\mathcal}
\def \mb {\mathbb}
\def\tran{^\top}

\newtheorem{theorem}{Theorem}

\newtheorem{lemma}{Lemma}

\newtheorem{remark}{Remark}

\usepackage{array}

\usepackage{soul}

\newcommand{\norm}[1]{\left\lVert#1\right\rVert}

\title{\LARGE \bf
Co-Learning Port-Hamiltonian Systems and Optimal Energy-Shaping Control\thanks{This work has been supported by NSF Award CMMI-1940950}
}

\author{Ankur Kamboj$^{1}$, Biswadip Dey$^{2}$, and Vaibhav Srivastava$^{1}$
\thanks{$^1$Electrical and Computer Engineering, Michigan State University, East Lansing,
MI 48824. \texttt{\{ankurank, vaibhav\}@msu.edu}.}
\thanks{$^2$
Meta Reality Labs, Redmond, WA 98052. \texttt{biswa-dey@ieee.org}}
}

\begin{document}

\date{}
\maketitle
\pagestyle{plain}

%%%%%%%%%%%%%%%%%%%%%%%%%%%%%%%%%%%%%%%%%%%%%%%%%%%%%%%%%%%%%%%%%%%%%%%%%%%%%%%%
\begin{abstract}
We develop a physics-informed learning framework for energy-shaping control of port-Hamiltonian (pH) systems from trajectory data. The proposed approach co-learns a pH system model and an optimal energy-balancing passivity-based controller (EB-PBC) through alternating optimization with policy-aware data collection. At each iteration, the system model is refined using trajectory data collected under the current control policy, and the controller is re-optimized on the updated model. Both components are parameterized by neural networks that embed the pH dynamics and EB-PBC structure, ensuring interpretability in terms of energy interactions. The learned controller renders the closed-loop system inherently passive and provably stable, and exploits passive plant dynamics without canceling the natural potential. A dissipation regularization enforces strict energy decay during training, thereby enhancing robustness to sim-to-real gaps. The proposed framework is validated on state-regulation and swing-up tasks for planar and torsional pendulum systems.
\end{abstract}

% \begin{keyword}
% Passivity-based control, port-Hamiltonian systems, Neural networks, Data-driven control, Stability of nonlinear systems
% \end{keyword}

% \begin{keyword}
% Passivity-based control, port-Hamiltonian systems, Lyapunov methods, Stability of nonlinear systems, Optimization, Neural networks, Data-driven control
% \end{keyword}

% \end{frontmatter}

%%%%%%%%%%%%%%%%%%%%%%%%%%%%%%%%%%%%%%%%%%%%%%%%%%%%%%%%%%%%%%%%%%%%%%%%%%%%%%%%
\section{Introduction}
% Energy-based control has seen a recent surge in interest across several fields, including robotics~\citep{holm2008kinetic, lin2019contact,franco2021energy,chang2023energy}, aerospace~\citep{aoues2019modeling}, and energy systems~\citep{strehle2018towards}. Systematically modeling physical systems as subsystems exchanging energy through ports gave rise to port-Hamiltonian~(pH) theory~\citep{van2014port}, which has been invaluable in modeling complex multi-domain physical systems owing to its modularity. Passivity-based controllers~(PBC) designed for pH systems have the advantage of rendering the closed-loop system passive and stable with respect to a storage function while achieving the control objective~\citep{ortega2002putting}. Moreover, controller design and parameter tuning are intuitive and can be expressed in terms of canceling and adding various energies and dampings in the system~\citep{schaft1999l2}. However, these design techniques~\citep{ortega2008control} are founded on achieving passivity and stability of the closed-loop system, and often ignore any performance considerations. 

Energy-based control has gained significant traction across several disciplines, including robotics, aerospace, and energy systems. 
% ~\citep{holm2008kinetic, lin2019contact,franco2021energy,chang2023energy}, aerospace~\citep{aoues2019modeling}, and energy systems~\citep{strehle2018towards}. 
Port-Hamiltonian (pH) theory~\cite{van2014port} provides a principled framework for modeling such systems as energy-exchanging subsystems, offering modularity and physical interpretability. Passivity-based control (PBC) leverages this structure to ensure closed-loop passivity and stability with respect to a storage function~\citep{ortega2002putting}, with intuitive design via energy shaping and damping injection~\citep{schaft1999l2}. However, classical PBC design primarily targets stability and often does not explicitly address performance optimization~\citep{ortega2008control}.

{Designing high-performance controllers for nonlinear systems remains fundamentally challenging due to model uncertainty, complex dynamics, and computational limitations. Classical model-based approaches offer strong guarantees when accurate models are available, but their performance degrades under mismatch and unmodeled effects. At the same time, purely data-driven methods have emerged as a promising alternative, leveraging trajectory data to synthesize control policies without explicit models. However, generic learning-based approaches often suffer from limited interpretability, weak stability guarantees, unsafe exploration, and slow convergence, limiting their applicability in safety-critical settings. To address these limitations, we propose a physics-informed learning framework that jointly learns structured system models and energy-shaping control policies, combining data-driven adaptability with passivity-based stability guarantees.}

Recent works have explored optimal control within the class of PBCs. Reinforcement learning-based approaches such as Energy-Balancing Actor-Critic (EB-AC)~\citep{sprangers2015reinforcement} and its robust extensions~\citep{gheibi2020designing} incorporate passivity structure to ensure stability and show accelerated convergence when plant model information is incorporated. Neural approaches such as~\citep{massaroli2022optimal, plaza2022total} parameterize energy-shaping policies, while~\citep{okura2020bayesian} and~\citep{sebastian2025physics} extend these ideas to Bayesian and multi-agent settings. However, these methods either assume known dynamics, operate in model-free settings, or impose restrictive structures (e.g., quadratic potentials and linear damping), limiting the ability to exploit intrinsic passive dynamics. Moreover, they often suffer from the dissipation obstacle~\citep{ZHANG2015dissipation}, that is, the desired Hamiltonian cannot be freely chosen.

In parallel, physics-informed learning methods have emerged for learning structured dynamical models. Hamiltonian neural networks~\citep{greydanus2019hamiltonian}, SymODEN~\citep{Zhong2020Symplectic}, and their extensions~\citep{duong2024port,dipersio2025stochastic,van2025learning,beckers2023learning} incorporate physical structure into learned dynamics. Bayesian approaches such as Gaussian Process port-Hamiltonian systems~(GP-PHS)~\citep{beckers2022gaussian} provide uncertainty quantification, \cite{roth2025stable} enforces global Lyapunov stability via convexity constraints, while~\cite{xu2025learning} learn a neural Koopman operator model that learns the dissipativity properties of the true system. 

However, these works primarily focus on system identification and do not jointly learn control policies~\citep{sivaranjani2025control}. More importantly, a critical question that remains largely unaddressed in the literature is: \emph{when a passivity-based controller is trained on an approximate (learned) model, can it provably stabilize the true system despite the sim-to-real gap?} Existing neural Lyapunov approaches~\citep{dai2021lyapunov, chang2019neural} typically assume access to true dynamics during verification.
% , while Bayesian methods offer only probabilistic guarantees.
To address these issues, we propose a physics-informed learning framework that jointly learns pH system models and optimal energy-balancing passivity-based control (EB-PBC) policies from trajectory data via alternating optimization. The model is iteratively refined using a combination of exploratory and policy-driven trajectories, while the controller is re-optimized on the updated model. This policy-aware data collection focuses on model accuracy in the closed-loop operating region. Both the system and controller are parameterized using neural networks that embed pH dynamics and EB-PBC structure, ensuring interpretability and passivity at every training iteration. We further show that the resulting controller is provably stable and robust to model approximation errors.

The proposed approach offers several advantages. First, the learned controller is inherently passive and stabilizes the system at every point during learning, unlike generic RL algorithms that offer no stability guarantees during exploration~\citep{banerjee2025survey}. Second, the NN-parameterization allows the controller to exploit existing passive dynamics rather than canceling them, a shortcoming identified in~\cite{ortega2008control}. Third, the alternating learning scheme improves model fidelity in the policy-relevant region, tightening stability guarantees. Finally, a dissipation regularization enforces strict energy decay during training, improving robustness to sim-to-real gaps. 

The major contributions of this work are fourfold. First, we develop an alternating optimization framework for joint pH system identification and EB-PBC synthesis from trajectory data. Second, we introduce a physics-informed neural architecture that enforces structural passivity and enables expressive energy shaping. Third, we establish deterministic stability guarantees for the learned controller on the true system, with explicit bounds on the convergence region in terms of model error. Fourth, we propose a dissipation regularization mechanism that improves robustness to sim-to-real mismatch.

The rest of the paper is organized as follows. Section~\ref{sec:background} reviews background on pH systems and EB-PBC. Section~\ref{sec:proposedPIDPG} presents the proposed framework. Section~\ref{sec:analysis} provides stability and robustness analysis. Section~\ref{sec:Experiments} presents simulation results, followed by conclusions in Section~\ref{sec:conclusion}.

\begin{figure*}[ht!]
    \centering
    \includegraphics[width=0.75\linewidth]{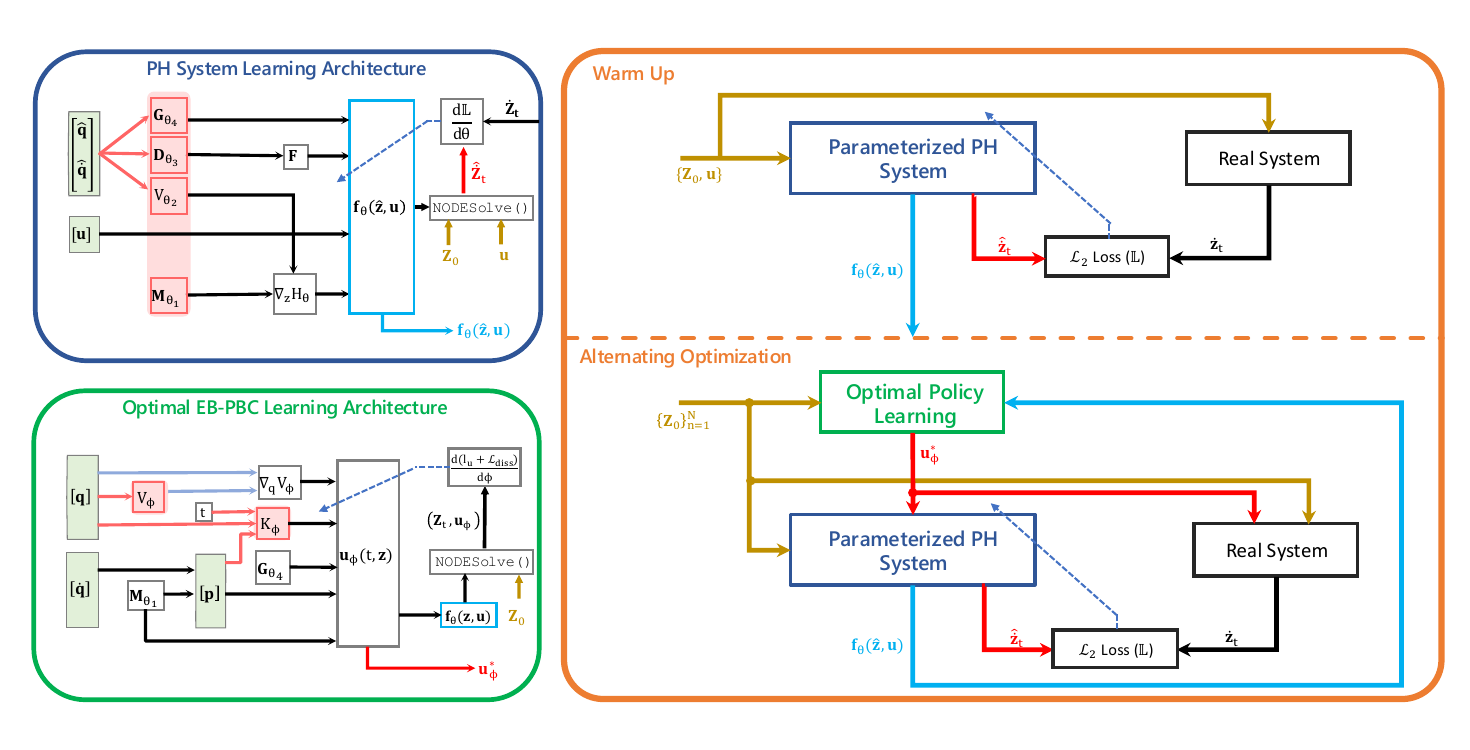}
    \caption{\textbf{Left:} The complete computational graph with different subsystems. Pink blocks and arrows represent NN parameterization. Green blocks are various system states. Purple arrows indicate automatic differentiation to obtain gradients. \textbf{Right:} The training consists of a warm-up phase for initial pH system learning from step-excited data, followed by alternating optimization iterations that refine the system model using policy-aware trajectory data and re-optimize the EB-PBC on the updated model. Different arrow and block colors are consistent with computational graphs.}
    \label{fig:flowchart}
\end{figure*}
\section{Background} \label{sec:background}
In this section, we briefly review Hamiltonian system representations, learning of Hamiltonian dynamics from data, and optimal energy-balancing passivity-based control (EB-PBC).

\subsection{Hamiltonian System Dynamics} \label{sec-phSystem}
Hamiltonian dynamics are a reformulation of Newtonian mechanics, where the dynamics of physical systems are derived from the Hamiltonian function, a scalar quantity representing the total energy of a system. Let $\bs q \in \real^n$ and $\bs p \in \real^n$ denote the generalized coordinate and the generalized momentum, respectively, the time evolution of system states in the phase space $(\bs q, \bs p)$ can be expressed as the symplectic gradient of the Hamiltonian function $H(\bs q, \bs p)$. 
For a large class of electromechanical systems, the Hamiltonian function admits a separable form with a quadratic kinetic energy, leading to 
%For this class of Hamiltonian systems, the dynamics can be expressed as 
\begin{equation}
\begin{aligned}
\dot{\bs q} = \nabla_{\bs p}H
\qquad
\dot{\bs p} = -\nabla_{\bs q}H,
\end{aligned}
\label{h-dynamics}
\end{equation}
with the following Hamiltonian function
\begin{align}
    H(\bs q, \bs p) &= \frac{1}{2}\bs p\tran \bs M^{-1}(\bs q) \bs p + V(\bs q), \label{hamiltonian}
\end{align}
where $\bs M(\bs q)$ is the mass/inertia matrix, and $V(\bs q)$ is the potential energy of the system. Also, since 
\begin{equation}
    \dot{H} = (\nabla_{\bs q}H)\tran \dot{\bs q} + (\nabla_{\bs p}H)\tran \dot{\bs p} = 0,
\end{equation}
the total energy of the system is always conserved along the system trajectories. 
%given by \eqref{h-dynamics} (we have dropped $H$'s explicit dependence on $\bs q$ and $\bs p$ for brevity). 
The Hamiltonian formulation extends to the port-Hamiltonian (pH) framework by incorporating inputs and dissipation:
\begin{align}
    \begin{bmatrix}
        \dot{\bs q} \\
        \dot{\bs p}
    \end{bmatrix} &=  
    \begin{bmatrix}
        \bs 0 & \bs I \\
        -\bs I & -\bs D(\bs q)
    \end{bmatrix}
    \begin{bmatrix}
        \nabla_{\bs q}H \\
        \nabla_{\bs p}H
    \end{bmatrix} + 
    \begin{bmatrix}
        \bs 0 \\
        \bs g(\bs q)
    \end{bmatrix} \bs u, \label{ph-dynamics} 
    \end{align}
   or equivalently, 
    \begin{align}
    \dot{\bs z} &= \bs F(\bs q) \nabla_{\bs z} H + \bs G(\bs q) \bs u, \label{ph_dynamics-concise}
\end{align}
where $\bs F = \bs J - \bs R \in \real^{2n \times 2n}$ is the system matrix, with $\bs J$ as the canonical symplectic matrix and $\bs R = \bs R\tran \succeq 0$ as the natural damping matrix taking into account the dissipative and friction effects, 
%(sometimes $\bs D = \bs D\tran \succeq 0$ which composes $\bs R$ is referred to as the damping/dissipation matrix), 
and $\bs g(\bs q) \in \real^{n\times m}$ is the input matrix assumed to be full column rank, $\text{rank}\{\bs g(\bs q)\}=m$.  The system is usually affine in the external input $\bs u \in \mc U \subset \real^m$, and it only affects the generalized momenta. It can be shown that $\bs F(\bs q) + \bs F\tran(\bs q) = -2\bs R \preceq 0$. Most robotic systems can be modeled as the system exchanging energy (or matter) with its environment or other systems, thus admitting a port-Hamiltonian formulation \citep{schaft1999l2}.

\subsection{Learning Hamiltonian Dynamics from Data} \label{sec-HNN}
Using Neural Networks (NNs) to learn a system represented by an ODE, $\dot{\bs z}=\bs f(\bs z, t)$, from observed data has gained significant attention in recent years. Physics-informed approaches, such as \citep{Zhong2020Symplectic}, learn Hamiltonian $H$ (or Lagrangian $L$) instead of directly parametrizing $\bs f$, and integrate the resulting dynamics using Neural ODE (NODE) solvers~\citep{chen2018neural}. Since $L$ or $H$ are scalar functions, they are easier to learn than vector fields, and their structure promotes energy conservation in the learned dynamics. 

Accordingly, the dynamics in \eqref{ph_dynamics-concise} are approximated by
\begin{equation}
    \bs f_{\theta} = \bs F_{\theta}(\bs q) \nabla H_{\theta} + \bs G_{\theta}(\bs q) \bs u.  \label{eq:f-theta-concise}
\end{equation}
The learned vector field $\bs f_{\theta}$ is integrated using a NODE solver to obtain predictions ${\bs Z}_\theta = \texttt{NODESolve}(\bs z_{t_0}, \bs f_{\theta}, \bs t)$, and $\dot{{\bs Z}}_\theta$, where ${\bs Z}_\theta = {\bs z_\theta}_{t_1}, {\bs z_\theta}_{t_2}, \ldots, {\bs z_\theta}_{t_n}$. Given observed data $\{\bs Z, \dot{\bs Z}\}$, the parameters $\theta$ are updated by minimizing the prediction error $\mathbb{L} = \|\dot{\bs Z} - \dot{{\bs Z_\theta}}\|^2_2$ using backpropagation through the NODE solver.

Incorporating control inputs can be handled by augmenting the system as
\begin{equation}
    \begin{bmatrix}
        \dot{\bs z} \\
        \dot{\bs u}
    \end{bmatrix}=
    \begin{bmatrix}
        \bs f_{\theta}(\bs z, \bs u) \\
        \bs h_{\theta}(\bs z)
    \end{bmatrix}= \Tilde{\bs f}_{\theta}(\bs z, \bs u),
\end{equation}
where $\bs h_{\theta}(\bs z) = 0$ if $\bs u$ in the observed data $(\bs z_t, \bs u_t)_{t=t_0}^{t_n}$ remains constant. 
This allows the use of separate neural networks to parameterize the system components, namely the mass matrix $\bs M_{\theta_1}(\bs q)$, potential $V_{\theta_2}(\bs q)$, input matrix $\bs g_{\theta_3}(\bs q)$, and dissipation $\bs D_{\theta_4}(\bs q)$. The resulting dynamics take the structured form
\begin{equation} \label{eq:f-theta}
    \bs f_{\theta}(\bs z, \bs u) =  
    \begin{bmatrix}
        \bs 0 & \bs I \\
        -\bs I & -\bs D_{\theta_4}(\bs q)
    \end{bmatrix}
    \begin{bmatrix}
        \nabla_{\bs q}H_{\theta} \\
        \nabla_{\bs p}H_{\theta}
    \end{bmatrix} + 
    \begin{bmatrix}
        \bs 0 \\
        \bs g_{\theta_3}(\bs q)
    \end{bmatrix} \bs u, %\label{ph-dynamics-theta}
\end{equation}
where
\begin{equation}
    H_{\theta}(\bs z) = \frac{1}{2}\bs p\tran \bs M_{\theta_1}^{-1}(\bs q) \bs p + V_{\theta_2}(\bs q).
\end{equation}

\begin{remark}
    Bounded activation functions (e.g., sigmoid, hyperbolic tangent) ensure bounded NN outputs, which in turn guarantee boundedness of $H_\theta$ and passivity of the unforced system.
\end{remark}

% \rev{
% \begin{proposition} \label{prop:1}
%     If parameterized $\bs f_\theta$ consists of NNs with $\gamma$-bounded activation functions $\sigma:\real \mapsto \real$, that is, $\exists$ $\gamma > 0$ such that for all $a \in \real$, $|\sigma(a)| \leq \gamma$, then $H_\theta$ is upper bounded for $\{\hat{\bs z}, \bs u\} \in \mc O \times \mc U$ (where $\mc O$ is a compact set) and \eqref{eq:f-theta} represents a pH system.
% \end{proposition}

% \begin{remark}
%     It is trivial to show that the output of a NN with $\gamma$-bounded activation functions (for example, sigmoid, hyperbolic tangent, etc.) is upper and lower bounded. In fact, it is enough to have $\gamma$-bounded activation functions in the last layer. A direct consequence of this is the lower boundedness of $H_\theta$ and thus passivity/stability of the unforced system \eqref{eq:f-theta}.
% \end{remark}
% }

Prior work \citep{Zhong2020Symplectic} demonstrates that such structured parameterizations can learn system dynamics to high accuracy.
%the dynamics of several systems can be learned up to numerical precision by using embedded angle data and parameterizing $\bs f_{\theta}$ in a systematic way by incorporating prior knowledge using Hamiltonians into end-to-end learning. 
It should be noted that NNs designed for the mass/inertia matrix $\bs M_{\theta_1}(\bs q)$ and the dissipation matrix $\bs D_{\theta_4}(\bs q)$ need to ensure their positive definiteness. Moreover, when $\bs p$ data is unavailable, the learned Hamiltonian may exhibit scaling ambiguities, often appearing as offsets in the mass matrix and learned potential.
% there exists a scaling invariance in the learned Hamiltonian, which might show itself as an offset in the learned $\bs M_{\theta_1}(\bs q)$.

\subsection{Optimal Energy-Shaping Control via EB-PBC}
We re-write the pH system \eqref{ph_dynamics-concise} with output equation
\begin{equation}   \label{eq:actual_system}
\begin{aligned}
    \dot{\bs z} = \bs F \nabla H + \bs G \bs u, \qquad
    \bs y = \bs G\tran \nabla H,
\end{aligned}
\end{equation}
satisfying the power balance equation
\begin{equation}
    \dot{H} = \bs u\tran \bs y - d,
\end{equation}
where $d = -\nabla H\tran \bs F \nabla H \geq 0$ is the natural %open-loop 
dissipation of the system.
The objective of designing a passivity-based control (PBC) with damping injection (DI) is to find a control $\bs u = \bs \beta(\bs z) + \bs v$ that renders the closed-loop system passive with respect to the desired stored energy (Hamiltonian) $H_d$ with equilibrium at desired $\bs z_d$ such that it satisfies the power-balance equation
\begin{equation}    \label{eq:desired_power_balance}
    \dot{H}_d = \bs y_d\tran \bs v - d^*,
\end{equation}
where $\bs y_d$ is the new passive output, and $d^*$ is the desired dissipation. For energy balancing (EB) PBC, we set the desired dissipation $d^* =d$,  the natural dissipation of the system.
%that is, $d^* = d = -\nabla H\tran \bs F \nabla H$.
A controller $\bs \beta(\bs z)$  achieves energy balancing if $\dot{H}_a = -\bs \beta\tran \bs y$, where 
$H_a := H_d - H$ is the added energy. 
%the state-feedback control $\bs \beta(\bs z)$ is EB if . 
Such a controller is defined by
\begin{equation}
    \bs \beta(\bs z) = -\bs G^{\dagger} \bs F\tran \nabla_{\bs z} H_a \label{u-EB}
\end{equation}
%solves the EB-PBC problem 
with $\bs G^{\dagger}$ as the left pseudo-inverse of $\bs G$ and $H_a$ satisfying the following matching PDEs \citep{ortega2008control}:
\begin{equation}
    \begin{bmatrix}
        \bs G^{\perp}  \bs F\tran\\
        \bs G\tran
    \end{bmatrix} \nabla_{\bs z} H_a = 0. \label{eq:pde-constraints}
\end{equation}
Here $\bs G^{\perp}$ is the left full-rank annihilator of $\bs G$, that is, $\bs G^{\perp}\bs G = 0$. 

Damping injection is achieved via %passive output $\bs y_d$ as
%To inject damping, feeding back the new passive output $\bs y_d$ as
\begin{equation}
    \bs v = - \bs K \bs y_d = - \bs K \bs G\tran \nabla_{\bs z}H_d,    \label{u-DI}
\end{equation}
where $\bs K = \bs K\tran \succ 0$, the damping matrix, ensures passivity by adding dissipation to the closed-loop system.

The choice of a desired energy $H_d$ and damping $\bs K$ is not unique, and an optimal EB-PBC policy %for the  $\bs u$ 
chooses optimal $H_d$ and $\bs K$ with respect to some cost function, while satisfying the nonlinear PDE constraint \eqref{eq:pde-constraints}. 
To address this issue, $H_d$ and $\bs K$ are parameterized using neural networks, enabling optimization over a structured policy class that implicitly satisfies the PDE constraints~\citep{massaroli2022optimal}.
Let $V_{\phi_V}^*$ and $\bs K^*_{\phi_K}$ be parametrized NN representations of the desired constant damping matrix and the  added potential, such that 
% we parametrize 
% the desired constant damping matrix $\bs K^*$, and the  added potential $V^*(\bs q)$  by NNs $V_{\theta_V}^*$ and $\bs K^*_{\theta_K}$, respectively, such that 
\[
{H_\phi}_{d}(\bs q, \bs p) = \bs p\tran \bs M^{-1} \bs p/2 + V(\bs q) + V_{\phi_V}^*(\bs q),
\]
and the resulting parameterized control policy, with $\phi := (\phi_V, \phi_K)$ is
\begin{align}
    \bs u_{\phi}(t, \bs z) = -\bs G^{\dagger}(\bs q) \bs F\tran(\bs q) \nabla_{\bs z} V_{\phi}^*(\bs q) - \bs K^*_{\phi}(t, \bs z) \bs g\tran(\bs q) \nabla_{\bs p}H(\bs z). \label{eq:u-theta-f}
\end{align}
The optimal policy $\bs u_{\phi}^*$ is obtained by minimizing a performance cost  $l_{\bs u}(\bs z_0, \bs u_{\phi})$ using gradient-based optimization, with gradients computed via backpropagation through the system dynamics.

% \bigskip 

% To address the issue of satisfying the nonlinear PDE constraint \eqref{eq:pde-constraints} and searching for $H_d$ and $\bs K^*$ in a functional space,   $H_d$ and $\bs K^*$ are parametrized using NNs and optimal control policy is computed on a submanifold that implicitly satisfies the PDE constraints \citep{massaroli2022optimal}.
% To learn the desired energy $H_d$ and 

% desired constant damping matrix $\bs K^*$, added potential $V^*(\bs q)$ (which fully compensates $V(\bs q)$ and adds a desired $V_d(\bs q)$) and damping matrix $\bs K^*$ are represented by NNs $V_{\theta_V}^*$ and $\bs K^*_{\theta_K}$, respectively, such that the ${H_d}_{\theta}(\bs q, \bs p) = \bs p\tran \bs M^{-1} \bs p/2 + V(\bs q) + V_{\theta}^*(\bs q)$, with $\phi := (\theta_V, \theta_K)$. 

% Denoting the performance measure as the cost function $l_{\bs u}(\bs z_0, \bs u_{\theta})$, we pose our parameterized control policy as
% \begin{align}
%     \bs u_{\phi}(t, \bs z) = & -\bs G^{\dagger}(\bs q) \bs F\tran(\bs q) \nabla_{\bs z} V_{\phi}^*(\bs q) \notag \\
%     & - \bs K^*_{\phi}(t, \bs z) \bs g\tran(\bs q) \nabla_{\bs p}H(\bs z). \label{eq:u-theta-f}
% \end{align}
% The optimal policy $\bs u_{\phi}^*$ can then be found out by iteratively updating the parameters $\theta$ using SGD on the gradient of performance measure d$l_u/$d$\phi$ calculated by backpropagating through the NODE used to propagate the system dynamics forward.

\begin{remark}
   Control strategies that completely compensate the natural potential $V(\bs q)$ and replace it with the desired $V_d(\bs q)$ suffer from 
   sensitivity to modeling errors and may discard beneficial passive dynamics of the system. Allowing partial shaping enables the controller to exploit intrinsic system structure and reduce control effort.  
    % The class of control strategies that seek to completely compensate the natural potential $V(\bs q)$ and replace it with the desired $V_d(\bs q)$ suffers from notable shortcomings, such as, imperfect knowledge of $V(\bs q)$ introducing unwanted nonlinearities stemming from partial or imperfect compensation. It also prevents the controller from leveraging the useful passive elements of the system that could assist in achieving the desired system behavior and potentially help reduce the control effort/energy.
\end{remark}

\subsection{Problem Formulation} \label{sec:prob_formulation}
% Model-based policy gradient methods require a model to sample the system trajectories. In the absence of a system model, the dynamics can be learned from the observed data for port-Hamiltonian systems, as shown in Sec \ref{sec-HNN}. We can thus break down this problem into two sub-problems,
% \begin{itemize}
%     \item \text{Sub-Problem 1:} Learning the system model using trajectory data, and
%     \item \text{Sub-Problem 2:} Using the learned system, learn an optimal EB-PBC policy of the form \eqref{eq:u-theta-f} that stabilizes the true system. 
% \end{itemize}
% These two sub-problems can then be solved iteratively or using an end-to-end learning scheme. We detail our proposed approach in the next section.

Accurate system models are essential for synthesizing high-performance controllers. However, in many practical settings, the system dynamics are unknown or only partially known and must be learned from data. For port-Hamiltonian systems, this can be achieved using structure-preserving learning methods, as discussed in Section~\ref{sec-HNN}.
We consider the problem of co-learning a port-Hamiltonian system model and synthesizing an optimal energy-shaping controller from trajectory data. The goal is to obtain a control policy of the form \eqref{eq:u-theta-f} that stabilizes the true system while optimizing performance, despite model uncertainty.

A key challenge is that model accuracy and control performance are tightly coupled: the quality of the learned model affects the policy, while the policy determines the regions of the state space from which data is collected. To address this, we propose a coupled learning framework that alternates between model refinement using policy-induced trajectories and policy optimization on the updated model. We detail our proposed approach in the next section.

\section{Proposed Physics Informed Architecture} \label{sec:proposedPIDPG}

In this section, we present the proposed learning architecture, which jointly learns the port-Hamiltonian system model and the optimal EB-PBC policy through an alternating optimization procedure (Fig.~\ref{fig:flowchart}).

In the absence of a true system model, the learned dynamics \eqref{eq:f-theta-concise} are used within the EB-PBC policy, yielding
\begin{align}     \label{eq:u-theta-f-theta}
    \bs u_{\phi}(t, \bs z) = -\bs G^{\dagger}_\theta(\bs q) \bs F_\theta\tran(\bs q) \nabla_{\bs z} V_{\phi}^*(\bs q) - \bs K^*_{\phi}(t, \bs z) \bs g_\theta\tran(\bs q) \nabla_{\bs p}H_\theta(\bs z).
\end{align}

A naive approach would be to perform end-to-end optimization by jointly updating the system and policy parameters $(\theta,\phi)$ using a composite loss $(\mb L + l_u)$. However, this approach is problematic for several reasons. First, the control loss $l_u$ induces gradients on $\theta$ that prioritize closed-loop performance over dynamical fidelity, potentially violating the physical consistency of the learned model and compromising robustness guarantees. Second, randomly initialized models generally do not satisfy the PDE matching constraints \eqref{eq:pde-constraints}, rendering the EB-PBC formulation invalid during early training. Third, jointly optimizing identification and control objectives introduces well-known gradient conflicts in multi-objective learning~\citep{bischof2025multi, wang2021understanding}, for which no general solution exists.
To address these challenges, we adopt an alternating optimization scheme with policy-aware data collection:

%Naively, one could attempt end-to-end optimization by back-propagating a composite loss through both the system and control parameters $(\theta, \phi)$ simultaneously. End-to-end optimization of a composite loss $(\mb L + l_{\bs u})$ on $(\theta, \phi)$ simultaneously is problematic because the control loss $l_{\bs u}$ imposes gradients on $\theta$ that favor low policy cost rather than dynamical fidelity, compromising the physical consistency of the learned model and, consequently, any robustness guarantees. Additionally, (i) randomly initialized dynamics do not satisfy the PDE matching constraints~\eqref{eq:pde-constraints}, invalidating the EB-PBC formulation~\eqref{eq:u-theta-f} until the model has sufficiently converged; and (ii) balancing identification and control losses in a single objective introduces well-documented gradient pathologies~\citep{bischof2025multi, wang2021understanding}, which is an active research area with no one-size-fits-all solution.  We circumvent these issues through alternating optimization between the two sub-problems, with policy-aware data collection coupling them across iterations: \\

\noindent 
\textbf{Phase 0 (Warm-up).} System trajectories are generated by applying constant excitation inputs $\bs u \in
\{-2, -1, 0, 1, 2\}$ to the true system. The system model $\bs f_\theta$ is trained on this data to obtain an initial pH
model that satisfies the structural properties required for EB-PBC synthesis.

\noindent 
\textbf{Phase 1 (Alternating Optimization).} At each outer iteration $k=1,2,\cdots :$

\begin{enumerate} [label=(\alph*)]
    \item \textit{System model update ($\theta$-step)}: The system model $\bs f_\theta$ is refined using trajectory data from the true system. This data is a mixture of (i) step-excited trajectories that ensure broad coverage of the state space, and (ii) policy-excited trajectories generated under the current EB-PBC $\bs u_\phi$, which focuses model capacity on the region of state space actually traversed by the closed-loop system. During this step, only the system parameters $\theta$ are updated, and the policy parameters $\phi$ are frozen.
    \item \textit{Policy update ($\phi$-step)}: Using the updated model $\bs f_\theta$, the EB-PBC parameters $\phi$ are optimized by integrating a batch of initial conditions $z_0 \sim \mb P_{z_0}$ through a NODE solver for $t \in [0, T]$ and minimizing the policy cost    
\begin{align} \label{policy_cost}
    l_u(\bs z_0) = L(T, \phi, \bs z_T(\bs z_0, \bs u_{\phi})) + \eta \int_0^T \left| \bs u_{\phi} (t, \bs z) \right| \textrm{d}t, \qquad \eta > 0.
\end{align}
During this step, only $\phi$ is updated, and $\theta$ is frozen.
\end{enumerate}

The use of policy-excited data in step (a)  is the key mechanism coupling model learning and control synthesis: as the controller improves, the data distribution shifts toward the policy-relevant region, which in turn improves the model precisely where closed-loop accuracy matters. This structure ensures that at every iteration,
(i) the system model $\bs f_\theta$ satisfies the pH structure (required by Lemma~\ref{prop:1}), (ii) the EB-PBC $\bs u_\phi$ operates on a valid pH manifold and is thus inherently passive (Lemma~\ref{lemma:2}), and (iii) the model training error $\epsilon = \norm{\bs f_\theta - \bs f}$ decreases preferentially in the policy-relevant region.
The loss gradients are computed via PyTorch automatic differentiation (AD) \citep{baydin2018automatic} library and backpropagated using the adjoint method \citep{chen2018neural}.

\subsubsection*{Sim-to-real Robustness}

A central question is whether the learned controller \eqref{eq:u-theta-f-theta} stabilizes the true system \eqref{ph_dynamics-concise}, i.e., whether it is robust to model mismatch, e.g., due to the sim-to-real gap. To address this, we augment the policy loss with a dissipation regularization term
\begin{align}   \label{eq:diss_regu}
    \subscr{\mc L}{diss} = \subscr{\lambda}{diss} \expt_{\bs z}\left[\text{ReLU}\left( \dot{H_{\theta, \phi}}_d(\bs z, \bs u_\phi) + \rho \norm{\bs z - {\bs z}_d}^2 \right) \right].
\end{align}
This term enforces a strict energy dissipation in the closed-loop system away from the critical point ${\bs z}_d$ of ${H_{\theta, \phi}}_d$. Here $\dot{H_{\theta, \phi}}_d = \nabla {H_{\theta, \phi}}_d\tran \bs f_\theta(\bs z, \bs u_\phi)$, $\rho$ is a tunable dissipation rate, and $\subscr{\lambda}{diss}$ is the regularization weight. Our augmented policy loss thus becomes
\begin{equation}
    \mb L_u = l_u(\bs z_0) + \subscr{\mc L}{diss}.
\end{equation}
In the next section, we show how this augmented policy cost helps impart sim-to-real robustness to the learned EB-PBC $\bs u_\phi$.

\section{Stability Analysis}    \label{sec:analysis}

In this section, we show that the optimal EB-PBC controller trained on the learned system $\bs f_\theta$ can stabilize the true system $\bs f$, thereby providing robustness to model mismatch. 
We first establish structural properties of the learned port-Hamiltonian model that will be used in the subsequent stability analysis.

\begin{lemma}\label{prop:1}
If $\bs f_\theta$ is parameterized using neural networks with bounded activation functions, then $H_\theta$ is bounded on compact domains, and \eqref{eq:f-theta} represents a port-Hamiltonian system.
\end{lemma}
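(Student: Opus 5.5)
The plan has two parts: a boundedness argument for $H_\theta$ on compact sets, followed by a structural check that \eqref{eq:f-theta} has port-Hamiltonian form. The standing assumptions are those of Section~\ref{sec-HNN}: the networks $\bs M_{\theta_1}$ and $\bs D_{\theta_4}$ are built to output positive definite and positive semidefinite matrices, respectively (e.g., through a Cholesky-type factor $\bs L\bs L\tran + \epsilon \bs I$), and all networks are compositions of affine maps with bounded, continuous activations $\sigma$, $|\sigma(a)|\le\gamma$.

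\emph{Boundedness.} A network whose last hidden layer uses $\gamma$-bounded activations has output bounded in norm by $\|W_{\mathrm{out}}\|\gamma\sqrt{h}+\|b_{\mathrm{out}}\|$, where $h$ is that layer's width. This bound holds uniformly in the input. Applying it gives bounded $V_{\theta_2}$, $\bs g_{\theta_3}$ and $\bs D_{\theta_4}$, and a bounded factor of $\bs M_{\theta_1}$.

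For the kinetic term, I will use the lower bound $\bs M_{\theta_1}(\bs q)\succeq \epsilon \bs I$, which comes from the $\epsilon \bs I$ shift in the parameterization. It gives $0\le \frac12\bs p\tran\bs M_{\theta_1}^{-1}\bs p\le \frac{1}{2\epsilon}\|\bs p\|^2$. This is finite on any compact set $\mc O$. Combined with the bound on $V_{\theta_2}$, it shows that $H_\theta$ is bounded on $\mc O$, and in fact bounded below on the whole space, which is the property used later for passivity. Continuity of the networks also gives continuity, indeed smoothness, of $H_\theta$, so $\nabla H_\theta$ exists.

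\emph{pH structure.} Write \eqref{eq:f-theta} as $\dot{\bs z}=(\bs J-\bs R_\theta)\nabla H_\theta+\bs G_\theta\bs u$, with
\begin{equation*}
\bs R_\theta=\begin{bmatrix}\bs 0&\bs 0\\ \bs 0&\bs D_{\theta_4}\end{bmatrix}.
\end{equation*}
The matrix $\bs J$ is skew-symmetric, and $\bs R_\theta=\bs R_\theta\tran\succeq0$ because $\bs D_{\theta_4}\succeq0$. Defining the output $\bs y=\bs G_\theta\tran\nabla H_\theta$, the power balance follows by direct computation:
\begin{equation*}
\dot H_\theta=\nabla H_\theta\tran(\bs J-\bs R_\theta)\nabla H_\theta+\bs u\tran\bs y=-\nabla_{\bs p}H_\theta\tran\bs D_{\theta_4}\nabla_{\bs p}H_\theta+\bs u\tran\bs y\le\bs u\tran\bs y.
\end{equation*}
Since $H_\theta$ is bounded below, this establishes passivity with storage $H_\theta$ (shifted to be nonnegative), so \eqref{eq:f-theta} is a pH system.

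\emph{Main obstacle.} The delicate point is that boundedness of the network outputs alone does not bound $\bs M_{\theta_1}^{-1}$. The argument genuinely needs a uniform positive-definiteness margin $\epsilon>0$ in the mass-matrix parameterization, and that margin must be stated as part of the construction. A secondary caveat concerns $\nabla H_\theta$: it requires differentiable activations (e.g., tanh), and with ReLU activations the statement would have to be weakened accordingly.
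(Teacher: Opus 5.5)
Your argument is correct and is the direct verification the paper omits as ``straightforward''; the Remark preceding the lemma sketches the same chain: bounded activations give bounded network outputs, hence a bounded-below $H_\theta$ that serves as a storage function, and the block form with $\bs R_\theta=\mathrm{diag}(\bs 0,\bs D_{\theta_4})\succeq 0$ gives the pH structure and the power balance. The only correction is that your ``main obstacle'' is not genuine here: the paper parameterizes the inverse mass matrix directly, $\bs M_{\theta_1}^{-1}=\bs L_{\theta_1}\tran\bs L_{\theta_1}$, so the bounded output of $\bs L_{\theta_1}$ already gives $0\le\frac12\bs p\tran\bs M_{\theta_1}^{-1}\bs p\le\frac12\sup\|\bs L_{\theta_1}\|^2\|\bs p\|^2$ with no $\epsilon$-margin (and even if $\bs M_{\theta_1}$ itself were parameterized, pointwise positive definiteness plus continuity would suffice on compact sets).
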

\begin{proof}
    The proof is straightforward and is omitted for brevity. 
\end{proof}

We now establish the structural and stability properties of the EB-PBC law under the actual model.
\begin{lemma}\label{lemma:1}
Consider the port-Hamiltonian system \eqref{eq:actual_system} under the parameterized EB-PBC law \eqref{eq:u-theta-f}. Suppose the added potential $V_\phi^* \in \mc C^1$, the damping matrix $K_\phi^*(t,\bs z) \succeq \kappa I, \kappa > 0$ uniformly in $(t,\bs z)$ and is bounded, and the matching conditions \eqref{eq:pde-constraints} hold. Then the closed-loop system can be written as
\begin{equation}
    \dot{\bs z} = \bs F_d(t,\bs z)\nabla H_{\phi_d}(\bs z),
    \label{eq:closed-loop-Fd}
\end{equation}
with $\bs F_d(t,\bs z) := \bs F(\bs z) - \bs G(\bs z) K_\phi^*(t,\bs z) \bs G(\bs z)^\top$, $H_{\phi_d} = H + V_\phi^*$, and
\begin{equation}
    \dot H_{\phi_d} = - \nabla H_{\phi_d}^\top
    \left(
        \bs R + \bs G K_\phi^* \bs G^\top
    \right)
    \nabla H_{\phi_d}
    \le 0.
    \label{eq:Hd-dot-lemma}
\end{equation}
Moreover, if $H_{\phi_d}$ is bounded below along system trajectories and $\dot H_{\phi_d}$ is uniformly continuous, then $\dot H_{\phi_d}(t)\to 0$ as $t\to\infty$. If, in addition, $\bs R+\bs G K_\phi^* \bs G^\top \succ 0$, uniformly in $t$,
in a neighborhood of $z_d$, and $z_d$ is a strict local minimum of $H_{\phi_d}$, then $z_d$ is locally asymptotically stable.
\end{lemma}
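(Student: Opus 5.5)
We argue in three steps: algebraic rewriting of the closed loop, the energy-decay identity, and then the convergence and stability conclusions.

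\textbf{Step 1 (closed-loop form).} Substitute \eqref{eq:u-theta-f} into \eqref{eq:actual_system}. Since $\bs G^\dagger \bs G = I$, the matrix $\bs G\bs G^\dagger$ is the projector onto $\operatorname{range}\bs G$ along the kernel of $\bs G^\perp$. Hence any vector $w$ with $\bs G^\perp w = 0$ satisfies $\bs G\bs G^\dagger w = w$. The first block of the matching PDE \eqref{eq:pde-constraints} gives $\bs G^\perp \bs F^\top\nabla V_\phi^* = 0$, so $-\bs G\bs G^\dagger\bs F^\top\nabla V_\phi^* = -\bs F^\top \nabla V_\phi^*$.

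Next, write $\bs F^\top = -\bs J - \bs R$ and recall $\nabla V_\phi^* = (\nabla_{\bs q}V_\phi^*,0)$. We then need $-\bs F^\top\nabla V_\phi^* = \bs F\nabla V_\phi^*$, which is equivalent to $(\bs F+\bs F^\top)\nabla V_\phi^*=-2\bs R\nabla V_\phi^*=0$. This holds because $\bs R$ acts only on the $\bs p$-block, where $\nabla V_\phi^*$ vanishes.

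For the damping term, the second block $\bs G^\top\nabla H_a=0$ gives $\bs G^\top \nabla H = \bs g^\top\nabla_{\bs p}H = \bs G^\top\nabla H_{\phi_d}$. Collecting terms yields
\[
\dot{\bs z}=\bs F\nabla H_{\phi_d} - \bs G K_\phi^* \bs G^\top\nabla H_{\phi_d} = \bs F_d\nabla H_{\phi_d}.
\]
The main care in this step is the sign bookkeeping of $\bs F^\top$ versus $\bs F$ and the pseudo-inverse projection; I expect it to be the main obstacle, and I would verify it blockwise.

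\textbf{Step 2 (decay).} Compute
\[
\dot H_{\phi_d}=\nabla H_{\phi_d}^\top \bs F_d \nabla H_{\phi_d}=\tfrac12\nabla H_{\phi_d}^\top(\bs F_d+\bs F_d^\top)\nabla H_{\phi_d}.
\]
Using $\bs F+\bs F^\top=-2\bs R$, this gives \eqref{eq:Hd-dot-lemma}. Nonpositivity follows from $\bs R\succeq0$ and $K_\phi^*\succeq\kappa I$.

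\textbf{Step 3 (convergence and stability).} Since $H_{\phi_d}(t)$ is nonincreasing and bounded below, $\int_0^\infty \dot H_{\phi_d}\,dt$ exists and is finite. Together with uniform continuity of $\dot H_{\phi_d}$, Barbalat's lemma gives $\dot H_{\phi_d}\to0$.

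For local asymptotic stability, take $V(\bs z)=H_{\phi_d}(\bs z)-H_{\phi_d}(\bs z_d)$ as a Lyapunov function. It is positive definite near $\bs z_d$ because $\bs z_d$ is a strict local minimum. Sublevel sets inside the neighborhood are compact and forward invariant, which gives Lyapunov stability uniformly in $t$.

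By uniform positive definiteness of $\bs R+\bs GK_\phi^*\bs G^\top\succeq cI$ there, $\dot V\le -c\|\nabla H_{\phi_d}\|^2$. Strictness of the minimum with $\mc C^1$ regularity makes $\bs z_d$ an isolated critical point; I would assume this where needed, since it is implicit in the hypotheses. Thus $\|\nabla H_{\phi_d}\|$ is bounded below on annuli around $\bs z_d$, so $\dot V$ is negative definite there. A standard time-varying Lyapunov argument, using a class-$\mathcal K$ bound via continuity of $V$, then yields local uniform asymptotic stability.

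Boundedness of $K_\phi^*$ is used to guarantee existence of solutions and to supply the uniform continuity needed for Barbalat's lemma.
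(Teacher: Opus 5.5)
Your proof is correct to the same standard as the paper's and has the same skeleton: rewrite the closed loop, compute $\dot H_{\phi_d}$ as a quadratic form, then invoke Barbalat. It differs from the paper in two places.

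\textbf{Getting $\bs R\nabla V_\phi^*=0$.} You read this off from the block structure $\bs R=\mathrm{diag}(0,\bs D)$ and $\nabla V_\phi^*=(\nabla_{\bs q}V_\phi^*,0)$. The paper instead derives $\bs R\nabla H_{\phi_a}=0$ from the matching conditions alone: premultiplying $\bs F^\top\nabla H_{\phi_a}=-\bs G\beta$ by $\nabla H_{\phi_a}^\top$ and using $\bs G^\top\nabla H_{\phi_a}=0$ gives $\nabla H_{\phi_a}^\top\bs R\nabla H_{\phi_a}=0$. Your shortcut is more elementary; the paper's argument works for any $\bs R\succeq 0$ and any $H_a$ solving the PDEs.

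This difference matters for the last clause of the lemma. Under your block structure, $\bs R+\bs G\bs K_\phi^*\bs G^\top=\mathrm{diag}(0,\bs D+\bs g\bs K_\phi^*\bs g^\top)$ is always singular. So wherever your Step~1 is valid, the stability hypothesis can never hold and that conclusion is vacuous. The clause has content only when the $\bs q\bs q$-block of $\bs R$ is positive definite, and there your shortcut is unavailable and the paper's derivation is needed.

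In your favor, you explicitly reconcile the damping term of \eqref{eq:u-theta-f}, written with $\nabla_{\bs p}H$, with $\bs G^\top\nabla H_{\phi_d}$ via $\bs G^\top\nabla H_{\phi_a}=0$. The paper simply uses $\bs v=-\bs K_\phi^*\bs G^\top\nabla H_{\phi_d}$ without comment. A minor slip: $\bs G\bs G^\dagger$ projects along $\ker\bs G^\dagger$, not along $\ker\bs G^\perp$, which equals $\operatorname{range}\bs G$. Only the fixed-point property you actually use matters, and that is correct.

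\textbf{Local asymptotic stability.} The paper goes from Barbalat to $\nabla H_{\phi_d}(\bs z(t))\to 0$ and then to $\bs z(t)\to\bs z_d$. It never shows Lyapunov stability, or that the trajectory stays in the neighborhood where the positive-definiteness hypothesis holds. Your route is more complete: forward-invariant sublevel sets of $H_{\phi_d}-H_{\phi_d}(\bs z_d)$, the bound $\dot V\le -c\|\nabla H_{\phi_d}\|^2$, and the standard uniform Lyapunov theorem together give both stability and attractivity, uniformly in $t$.

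Both routes, however, need $\bs z_d$ to be an isolated critical point of $H_{\phi_d}$. Your sentence claiming this follows from strictness plus $\mc C^1$ regularity is false, as is the identical claim in the paper. For example, $x^4(2+\sin(1/x))$, extended by $0$ at $x=0$, is $\mc C^1$ with a strict minimum at $0$, yet its critical points accumulate there. Every critical point of $H_{\phi_d}$ is an equilibrium of $\dot{\bs z}=\bs F_d\nabla H_{\phi_d}$, so isolatedness is genuinely necessary for attractivity. Keep your fallback of assuming it (for instance via a nondegenerate Hessian at $\bs z_d$) and drop the claimed derivation.

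Likewise, boundedness of $\bs K_\phi^*$ does not by itself give uniform continuity of $\dot H_{\phi_d}$. That is an explicit hypothesis of the lemma and should be used as such.
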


\begin{proof}
Let $H_a := H_{\phi_d} - H$. From matching conditions \eqref{eq:pde-constraints} 
\[
\bs G^\top \nabla H_{\phi_a} = 0,
\qquad
\bs G^\perp \bs F^\top \nabla H_{\phi_a} = 0,
\]
hence $\bs F^\top \nabla H_{\phi_a} \in \mathrm{im}(\bs G)$. With $\beta = -\bs G^\dagger \bs F^\top \nabla H_{\phi_a}$, it follows that
\[
\bs F^\top \nabla H_{\phi_a} = -\bs G \beta.
\]
Premultiplying by $\nabla H_{\phi_a}^\top$ and using $\bs F=\bs J-\bs R$, $\bs J^\top=-\bs J$, yields
\[
\nabla H_{\phi_a}^\top \bs R \nabla H_{\phi_a} = 0 \;\Rightarrow\; \bs R \nabla H_{\phi_a} = 0.
\]

Next, using $\nabla H = \nabla H_{\phi_d} - \nabla H_{\phi_a}$ and $\bs u = \beta + \bs v$,
\[
\dot{\bs z}
= \bs F(\nabla H_{\phi_d} - \nabla H_{\phi_a}) + \bs G\beta + \bs G\bs v.
\]
Using $\bs F+\bs F^\top = -2\bs R$ and $\bs R\nabla H_{\phi_a}=0$, we obtain
\[
-\bs F\nabla H_{\phi_a} = \bs F^\top \nabla H_{\phi_a},
\]
hence
\[
\dot{\bs z}
= \bs F \nabla H_{\phi_d} + \bs F^\top \nabla H_{\phi_a} + \bs G\beta + \bs G\bs v.
\]
Using $\bs F^\top \nabla H_{\phi_a} = -\bs G\beta$, the terms cancel and
\[
\dot{\bs z}
= \bs F \nabla H_{\phi_d} + \bs G\bs v.
\]
With $\bs v = -K_\phi^* \bs G^\top \nabla H_{\phi_d}$,
\[
\dot{\bs z}
=
\left(\bs F - \bs G K_\phi^* \bs G^\top\right)\nabla H_{\phi_d},
\]
which proves \eqref{eq:closed-loop-Fd}.

Finally, to establish convergence, we note, along trajectories,
\[
\dot H_{\phi_d}
=
\nabla H_{\phi_d}^\top \dot{\bs z}
=
\nabla H_{\phi_d}^\top
\left(\bs F - \bs G K_\phi^* \bs G^\top\right)
\nabla H_{\phi_d}.
\]
Using $\bs F=\bs J-\bs R$ and skew-symmetry of $\bs J$,
\[
\dot H_{\phi_d}
=
-\nabla H_{\phi_d}^\top
\left(
\bs R + \bs G K_\phi^* \bs G^\top
\right)
\nabla H_{\phi_d}
\le 0.
\]
Hence $H_{\phi_d}(t)$ is non-increasing and converges to a finite limit. If $\dot H_{\phi_d}$ is uniformly continuous, then by Barbalat's lemma,
\[
\lim_{t\to\infty}
\nabla H_{\phi_d}^\top
\left(
\bs R + \bs G K_\phi^* \bs G^\top
\right)
\nabla H_{\phi_d}
=0.
\]

Finally, if $\bs R+\bs G K_\phi^* \bs G^\top \succ 0$ locally, then $\nabla H_{\phi_d}(\bs z(t))\to 0$. Since $z_d$ is a strict local minimum of $H_d$, it is the only critical point in a neighborhood, and thus $\bs z(t)\to z_d$, proving local asymptotic stability.
\end{proof}

The following result shows that the EB-PBC structure is preserved under the learned model.

\begin{lemma}   \label{lemma:2}
Let the parameterized damping $\bs K^*_{\phi}(t, \bs z) \succeq \kappa I, \kappa >0$ uniformly in $(t, \bs z)$, and the added potential $V^*_{\phi}$, be bounded. Additionally, assume that $\bs R_\theta + \bs G_\theta K_\phi^* \bs G_\theta^\top \succ 0$, uniformly in $t$, in a neighborhood of $\bs z_d$, and that $\bs z_d$ is a strict local minimum of $H_{{\theta, \phi}_d}$. Then the closed-loop learned system
    \begin{equation}
        \dot{\bs z} = \bs F_{{\theta, \phi}_d}\nabla H_{{\theta, \phi}_d}
    \end{equation}
    is asymptotically stable, with $\bs F_{{\theta, \phi}_d} = \bs F_\theta - \bs G_\theta \bs K^*_{\phi} \bs G_\theta\tran$ and $H_{{\theta, \phi}_d}$ satisfying the desired power-balance equation \eqref{eq:desired_power_balance}.
\end{lemma}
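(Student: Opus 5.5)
The plan is to reduce Lemma~\ref{lemma:2} to Lemma~\ref{lemma:1}, applied with the learned model in place of the true plant. By Lemma~\ref{prop:1}, the learned vector field \eqref{eq:f-theta} is a port-Hamiltonian system. Its structure matrix is $\bs F_\theta = \bs J - \bs R_\theta$, with $\bs R_\theta = \mathrm{blkdiag}(\bs 0, \bs D_{\theta_4}) \succeq 0$, which holds because the dissipation network is built to be positive semidefinite. Its input matrix is $\bs G_\theta = [\bs 0;\ \bs g_{\theta_3}]$, and the bounded activations make $H_\theta$ bounded on compact sets. So every structural hypothesis of Lemma~\ref{lemma:1} holds once we replace $(\bs F,\bs G,H)$ by $(\bs F_\theta,\bs G_\theta,H_\theta)$ and take the policy to be \eqref{eq:u-theta-f-theta}.

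The remaining hypothesis to check is the matching condition \eqref{eq:pde-constraints} for the added energy $H_a = V^*_\phi(\bs q)$. Since $V^*_\phi$ depends only on $\bs q$, we have $\nabla_{\bs z} V^*_\phi = [\nabla_{\bs q}V^*_\phi;\ \bs 0]$, so $\bs G_\theta\tran \nabla_{\bs z}V^*_\phi = 0$ automatically. For the second condition we need $\bs G_\theta^\perp \bs F_\theta\tran \nabla_{\bs z} V^*_\phi = 0$. Here $\bs F_\theta\tran \nabla_{\bs z} V^*_\phi = [\bs 0;\ \nabla_{\bs q}V^*_\phi]$, so the condition reads $\bs g_\theta^\perp \nabla_{\bs q} V^*_\phi = 0$. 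In the fully actuated case this is trivially true. In general it says $\nabla_{\bs q}V^*_\phi \in \mathrm{im}\,\bs g_\theta$, and I will state it as the implicit constraint the parameterization lives on, following \citep{massaroli2022optimal}.

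With matching in hand, the calculation in the proof of Lemma~\ref{lemma:1} carries over line by line. It first gives $\bs R_\theta \nabla V^*_\phi = 0$, and then the closed loop
\[
\dot{\bs z} = \bigl(\bs F_\theta - \bs G_\theta \bs K^*_\phi \bs G_\theta\tran\bigr)\nabla H_{{\theta,\phi}_d},
\qquad H_{{\theta,\phi}_d} = H_\theta + V^*_\phi .
\]
It also gives $\dot H_{{\theta,\phi}_d} = -\nabla H_{{\theta,\phi}_d}\tran(\bs R_\theta + \bs G_\theta \bs K^*_\phi \bs G_\theta\tran)\nabla H_{{\theta,\phi}_d} \le 0$. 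To get the power-balance equation \eqref{eq:desired_power_balance}, I keep $\bs v$ as an external input rather than substituting the damping-injection feedback. This yields $\dot H_{{\theta,\phi}_d} = \bs y_d\tran \bs v - d_\theta$, with $\bs y_d = \bs G_\theta\tran \nabla H_{{\theta,\phi}_d}$ and $d_\theta = \nabla H_{{\theta,\phi}_d}\tran \bs R_\theta \nabla H_{{\theta,\phi}_d}$. Because $\bs R_\theta \nabla V^*_\phi = 0$, $d_\theta$ equals $\nabla H_\theta\tran \bs R_\theta \nabla H_\theta$, the natural dissipation, which is exactly the energy-balancing requirement $d^* = d$.

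For asymptotic stability I would avoid Barbalat and argue locally with $W = H_{{\theta,\phi}_d} - H_{{\theta,\phi}_d}(\bs z_d)$. Since $\bs z_d$ is a strict local minimum, $W$ is positive definite on a small ball $B$. Because $\dot W\le 0$, a sublevel set $\Omega_c \subset B$ is positively invariant and compact, which gives Lyapunov stability. Boundedness of $V^*_\phi$ and $\bs K^*_\phi$, together with the bounded networks, makes the vector field bounded on $\Omega_c$. Hence $\dot W$ is uniformly continuous in $t$, and $\dot W \to 0$. Uniform positive definiteness of $\bs R_\theta + \bs G_\theta \bs K^*_\phi \bs G_\theta\tran$ then forces $\nabla H_{{\theta,\phi}_d}(\bs z(t))\to 0$. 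Since $\bs z_d$ is the only critical point in $B$ (which I take as part of strictness, shrinking $B$ if necessary), $\bs z(t)\to \bs z_d$.

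The main obstacle I expect is the matching PDE in the underactuated case, which the lemma leaves implicit. I would handle it by making explicit that $V^*_\phi$ is restricted so that $\nabla_{\bs q}V^*_\phi \in \mathrm{im}\,\bs g_\theta$. A secondary subtlety is uniform continuity with time-varying $\bs K^*_\phi$: boundedness of $\bs K^*_\phi$ alone does not control $\partial_t \bs K^*_\phi$. I would handle this by using its boundedness together with a bounded-derivative assumption, or, for time-invariant $\bs K^*_\phi$, by appealing to LaSalle's invariance principle instead.
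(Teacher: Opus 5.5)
Your proposal is correct and follows the paper's route. The paper's proof invokes Lemma~\ref{prop:1} to get the pH structure of $\bs f_\theta$ and then applies Lemma~\ref{lemma:1} with $(\bs F_\theta,\bs G_\theta,H_\theta)$ in place of $(\bs F,\bs G,H)$; you additionally spell out the matching check, the power balance with $\bs v$ kept external, and the extra hypotheses (uniform continuity, and $\bs z_d$ being an isolated critical point) that this one-line appeal to Lemma~\ref{lemma:1} glosses over.

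One simplification: you do not need a time-regularity assumption on $\bs K^*_\phi$. The bound $\bs R_\theta+\bs G_\theta\bs K^*_\phi\bs G_\theta\tran\succeq\mu I$ near $\bs z_d$ gives $\int_0^\infty\|\nabla H_{{\theta,\phi}_d}(\bs z(t))\|^2\,\mathrm{d}t\le W(\bs z(0))/\mu$, and Barbalat applies directly to this integrand, which is uniformly continuous because $\dot{\bs z}$ is bounded on the compact invariant set $\Omega_c$.
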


\begin{proof}
From Lemma~\ref{prop:1}, the learned dynamics $\bs f_\theta$ retain the port-Hamiltonian structure. The result then follows directly by applying Lemma~\ref{lemma:1} to the learned system.
\end{proof}
Thus, the optimal EB-PBC controller renders the closed-loop learned system \eqref{eq:f-theta-concise} passive with respect to the desired Hamiltonian and consequently renders the equilibrium locally asymptotically stable under the stated conditions.

Before moving on to our main result, we introduce the following useful result.

\begin{lemma} \label{prop:F_invertible}
Let $\bs F = \bs J - \bs R$, where $\bs J^\top = -\bs J$ and $\bs R = \bs R^\top \succeq 0$. If $\bs J$ is invertible, then $\bs F$ is invertible.
\end{lemma}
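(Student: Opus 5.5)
We plan to argue by contradiction via the kernel of $\bs F$: we show that $\bs F\bs x = 0$ forces $\bs x = 0$. Since $\bs F$ is square ($2n\times 2n$), a trivial kernel is equivalent to invertibility. The argument is the same quadratic-form trick used in the proof of Lemma~\ref{lemma:1} to deduce $\bs R\nabla H_{\phi_a}=0$.

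\textbf{Step 1: reduce to the symmetric part.} Take $\bs x\in\real^{2n}$ with $\bs F\bs x=0$ and premultiply by $\bs x\tran$. Skew-symmetry of $\bs J$ gives
\[
\bs x\tran \bs J\bs x = (\bs x\tran \bs J\bs x)\tran = -\bs x\tran \bs J\bs x,
\]
so $\bs x\tran\bs J\bs x=0$. Therefore $0=\bs x\tran\bs F\bs x = -\bs x\tran\bs R\bs x$.

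\textbf{Step 2: kill the dissipation term.} Because $\bs R=\bs R\tran\succeq 0$, it has a symmetric square root $\bs R^{1/2}$. Then $\bs x\tran\bs R\bs x=\norm{\bs R^{1/2}\bs x}^2=0$ gives $\bs R^{1/2}\bs x=0$, and hence $\bs R\bs x=0$.

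\textbf{Step 3: conclude.} Substituting Step 2 into $\bs F\bs x=0$ yields $\bs J\bs x=\bs F\bs x+\bs R\bs x=0$. Invertibility of $\bs J$ then forces $\bs x=0$. Thus $\ker\bs F=\{0\}$, and $\bs F$ is invertible.

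The only step needing care is Step 2. There one must use positive semidefiniteness (through the square root or an eigen-decomposition) to pass from a vanishing quadratic form to $\bs R\bs x=0$; semidefiniteness alone does not give this without the symmetric factorization. Everything is real, so no complex-vector subtleties arise. We also remark that the hypothesis on $\bs J$ is automatically satisfied in our setting, since the canonical symplectic matrix satisfies $\bs J^{-1}=-\bs J$.
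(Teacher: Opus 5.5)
Your proof is correct and follows the paper's argument step for step. Take $\bs x \in \ker \bs F$, use skew-symmetry of $\bs J$ to get $\bs x^\top \bs R \bs x = 0$, deduce $\bs R \bs x = 0$ from $\bs R = \bs R^\top \succeq 0$, and conclude $\bs J \bs x = 0$, hence $\bs x = 0$. Your square-root justification of $\bs x^\top \bs R \bs x = 0 \Rightarrow \bs R \bs x = 0$ only spells out a step the paper states without proof.
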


\begin{proof}
Let $\bs F\bs v = 0$. Then $\bs J\bs v = \bs R\bs v$. Premultiplying by $\bs v^\top$ gives
\[
\bs v^\top \bs J \bs v = \bs v^\top \bs R \bs v.
\]
Since $\bs J$ is skew-symmetric, the left side is zero, so $\bs v^\top \bs R \bs v = 0$. As $\bs R \succeq 0$, this implies $\bs R\bs v = 0$, hence $\bs J\bs v = 0$. Since $\bs J$ is invertible, $\bs v = 0$.
\end{proof}

We state the assumptions used in the subsequent analysis.

\begin{enumerate}[label=\text{(A\arabic*)}]
\item \emph{Model approximation error:}
There exists a domain $\mc D \subset \real^{2n}$ such that, for all $\bs z \in \mc D$ and admissible inputs $\bs u$, including $\bs u = \bs u_\phi(\bs z)$.
\begin{align*}
\|\bs f_\theta(\bs z,\bs u) - \bs f(\bs z,\bs u)\|
\le \epsilon,\qquad
\|\bs G_\theta(\bs z) - \bs G(\bs z)\|
\le \epsilon_G.
\end{align*}

    \item \emph{Damping approximation:}
    \[
    \|\bs R_\theta(\bs z) - \bs R(\bs z)\| \le \epsilon_R,
    \quad \forall \bs z \in \mc D.
    \]

    \item \emph{Gradient boundedness:}
There exists $\bar{\alpha}_{\mc D}>0$ such that
\[
\|\nabla H_\theta(\bs z)\| \le \bar{\alpha}_{\mc D}, \quad
\|\nabla V_\phi^*(\bs z)\| \le \bar{\alpha}_{\mc D}, \quad \forall \bs z\in\mc D.
\]

    \item \emph{Shared energy shaping:}
    The same learned potential $V_\phi^*(\bs q)$ is applied to both systems, i.e.,
    \[
    H_a(\bs z) = H_{\phi_a}(\bs z) = V_\phi^*(\bs q).
    \]

\item \emph{Approximate dissipation of the trained learned closed-loop:}
There exist constants $\rho>0$ and $\varepsilon_{\mathrm{diss}}\ge 0$ such that, for all $\bs z\in\mc D$,
\begin{align*}
 \nabla H_{{\theta, \phi}_d}^\top
\bigl(\bs R_\theta(\bs z) + \bs G_\theta(\bs z)\bs K_\phi^*(t, \bs z)\bs G_\theta(\bs z)^\top\bigr)
\nabla H_{{\theta, \phi}_d}
\ge
\rho \|\bs z-\bs z_d\|^2 - \varepsilon_{\mathrm{diss}}.   
\end{align*}
\end{enumerate}

Assumptions A1–A2 quantify the approximation accuracy of the learned port-Hamiltonian model and its dissipation structure, which can be achieved using sufficient data coverage. Assumption A3 is mild and holds on compact domains due to the smoothness of the learned Hamiltonian. Assumption A4 reflects the controller implementation, where the same energy-shaping policy is applied to both the learned and true systems. 
Assumption A5 characterizes the dissipative behavior of the trained learned closed-loop system. It imposes a lower bound on the learned dissipation up to a residual $\varepsilon_{\mathrm{diss}}$, reflecting approximation and optimization errors. This condition is promoted by the dissipation regularization term \eqref{eq:diss_regu}, which penalizes violations of energy decay during training. 
%The parameter $\rho$ acts as a tunable decay rate, allowing the learned controller to enforce stronger dissipation and thereby reduce the ultimate bound on the true system trajectories.

% \begin{lemma}   \label{lemma:3}
%     Let $\bs F=\bs J - \bs R$ be the system matrix with $c_F=\subscr{\sigma}{min}(\bs F) > 0$. Suppose
%     \begin{enumerate}[label=(\roman*)]
%         \item $\norm{\bs f_\theta(\bs z, \bs u) - \bs f(\bs z, \bs u)} < \epsilon$ for sufficiently small $\epsilon > 0$,
%         \item $\norm{\bs R_\theta - \bs R} < \epsilon_R$ holds for sufficiently small $\epsilon_R >0$,
%         \item $\bs z_\theta$ lies in a compact set $\mc C \in \real^{2n}$ containing the critical point ${\bs z_\theta}_d$ of $H_\theta$ such that $\norm{\nabla H_\theta} \leq \bar{\alpha}_{\mc C}$ for some constant $\bar{\alpha}_{\mc C} > 0$, and
%         \item the same EB-PBC NN $V^*_\phi$ is applied to both the learned and the true systems at identical states, so that $H_a(\bs z) = {H_\theta}_a(\bs z) = V^*_\phi(\bs q)$.
%     \end{enumerate}
%     Then the closed-loop system Hamiltonians are such that
%     \begin{equation}
%         \norm{\nabla H_d - \nabla {H_\theta}_d} \leq \frac{1}{c_F} (\epsilon + \epsilon_R \bar{\alpha}_{\mc C}) := \delta(\epsilon, \epsilon_R).
%     \end{equation}
%     In particular, $\delta \to 0$ as $\epsilon , \epsilon_R \to 0$.
% \end{lemma}

\begin{lemma}\label{lemma:3}
Let $\bs F = \bs J - \bs R$ and define $c_F := \inf_{\bs z \in \mc D} \sigma_{\min}(\bs F(\bs z)) > 0$.
Under Assumptions A1–A4, for all $\bs z \in \mc D$,
\[
\|\nabla H_d(\bs z) - \nabla H_{{\theta, \phi}_d}(\bs z)\|
\le
\frac{1}{c_F}\bigl(\epsilon + \epsilon_R \bar{\alpha}_{\mc D}\bigr)
=: \delta(\epsilon,\epsilon_R).
\]
In particular, $\delta \to 0$ as $\epsilon,\epsilon_R \to 0$.
\end{lemma}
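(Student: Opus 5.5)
The plan is to compare the two closed-loop vector fields through their port-Hamiltonian representations from Lemma~\ref{lemma:1} and Lemma~\ref{lemma:2}, and then invert $\bs F$ using Lemma~\ref{prop:F_invertible}. The lower bound $c_F>0$ turns that invertibility into a quantitative estimate: for any vector $\bs w$, $\|\bs w\|\le \frac{1}{c_F}\|\bs F(\bs z)\bs w\|$ on $\mc D$.

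\textbf{Step 1: reduce the closed-loop difference to an open-loop one.} By Assumption A4, $H_d = H + V_\phi^*$ and $H_{{\theta,\phi}_d} = H_\theta + V_\phi^*$. Hence
\[
\nabla H_d - \nabla H_{{\theta,\phi}_d} = \nabla H - \nabla H_\theta ,
\]
and the added potential drops out. It therefore suffices to bound $\|\nabla H - \nabla H_\theta\|$ on $\mc D$.

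\textbf{Step 2: multiply by $\bs F$ and split the difference.} Write
\[
\bs F(\nabla H - \nabla H_\theta) = \bigl(\bs F\nabla H - \bs F_\theta\nabla H_\theta\bigr) + (\bs F_\theta - \bs F)\nabla H_\theta .
\]
Since $\bs F_\theta - \bs F = -(\bs R_\theta - \bs R)$, Assumptions A2 and A3 bound the second term by $\epsilon_R\bar{\alpha}_{\mc D}$.

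The first term is the drift mismatch. Set $\bs u = 0$, which is admissible, in $\bs f$ and $\bs f_\theta$. Then $\bs f(\bs z,0)-\bs f_\theta(\bs z,0) = \bs F\nabla H - \bs F_\theta\nabla H_\theta$, and Assumption A1 bounds it by $\epsilon$.

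\textbf{Step 3: invert.} Dividing by $c_F$ gives
\[
\|\nabla H_d - \nabla H_{{\theta,\phi}_d}\| \le \frac{1}{c_F}\bigl(\epsilon+\epsilon_R\bar{\alpha}_{\mc D}\bigr),
\]
which is the claimed bound. The limit $\delta\to0$ follows immediately, since $c_F$ and $\bar{\alpha}_{\mc D}$ do not depend on $\epsilon$ or $\epsilon_R$.

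\textbf{Main obstacle.} The delicate step is the choice of input in Step 2. Evaluating A1 at $\bs u=\bs u_\phi$ would bring in the input-matrix mismatch $\epsilon_G$ and the controller terms, which do not appear in the stated $\delta$. Using $\bs u=0$ (or any common input, where the $\bs G\bs u$ terms are handled separately) avoids this. I would state explicitly that A1 is applied at zero input, which A1 permits since it holds for all admissible inputs.

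If one insisted on working with the closed-loop fields $\bs F_d\nabla H_d$ instead, the additional damping-injection terms $\bs G\bs K^*\bs G^\top$ would produce $\epsilon_G$ contributions. This is a second reason to route the argument through A4 and the open-loop drift.
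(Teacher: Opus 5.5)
Your proposal is correct and follows essentially the same route as the paper: apply A1 at $\bs u = 0$, write $\bs F\nabla H - \bs F_\theta\nabla H_\theta$ as $\bs F(\nabla H - \nabla H_\theta)$ plus the damping-mismatch term bounded by $\epsilon_R\bar{\alpha}_{\mc D}$ via A2--A3, invert $\bs F$ using $\sigma_{\min}(\bs F)\ge c_F$, and use A4 to cancel the shared added potential $V_\phi^*$. The only difference is cosmetic: you invoke A4 at the start rather than at the end.
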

\begin{proof}
For $u=0$,
\begin{align}
\norm{\bs F_\theta \nabla H_\theta - \bs F \nabla H}
= \norm{\bs f_\theta(\bs z, 0) - \bs f(\bs z, 0)} < \epsilon.
\end{align}
Writing $\bs F_\theta = \bs F - \tilde{\bs R}$ with $\tilde{\bs R} := \bs R_\theta - \bs R$,
\[
\bs F \nabla H - \bs F_\theta \nabla H_\theta = \bs F(\nabla H - \nabla H_\theta) + \tilde{\bs R}\nabla H_\theta.
\]
Thus,
\[
\norm{\bs F \bs h} \le \epsilon + \epsilon_R \bar{\alpha}_{\mc D},
\quad \bs h := \nabla H - \nabla H_\theta.
\]
Using $\norm{\bs F^{-1}} \le 1/c_F$,
\[
\norm{\bs h} \le \frac{1}{c_F}(\epsilon + \epsilon_R \bar{\alpha}_{\mc D}).
\]
Since $H_{{\theta, \phi}_d} - H_\theta = H_d - H = V^*_\phi(\bs q)$, the result follows.
\end{proof}

We are now ready to state our main result.

\begin{theorem}\label{theorem1}
Consider the true system \eqref{ph_dynamics-concise} in closed loop with the EB-PBC controller $\bs u_\phi$ trained on the learned model $\bs f_\theta$. Let $\bs z_d$ be a strict local minimum of $H_d$.

Under Assumptions A1–A4, there exists a neighborhood $\mc O \subset \mc D$ of $\bs z_d$ and a nonnegative residual $\xi=\mathcal{O}(\epsilon,\epsilon_R,\epsilon_G)$ such that, for all $\bs z \in \mc O$,
\[
\dot H_d(\bs z)
\le
-\nabla H_d(\bs z)^\top
\bigl(\bs R(\bs z)+\bs G(\bs z)\bs K_\phi^*(\bs z)\bs G(\bs z)^\top\bigr)
\nabla H_d(\bs z)
+\xi.
\]

Consequently, the true closed-loop system is dissipative with respect to $H_d$ up to the residual $\xi$.

Furthermore, if Assumption A5 holds, then
\[
\dot H_d(\bs z)
\le
-\rho\|\bs z-\bs z_d\|^2+\xi+\varepsilon_{\mathrm{diss}},
\quad \forall \bs z \in \mc O.
\]

Hence, the closed-loop system is locally practically asymptotically stable: trajectories are driven to the smallest sublevel set of $H_d$ containing
\[
\left\{\bs z\in\mc O:\ \|\bs z-\bs z_d\|\le \sqrt{\frac{\xi+\varepsilon_{\mathrm{diss}}}{\rho}}\right\},
\]
and remain in this sublevel set thereafter.
\end{theorem}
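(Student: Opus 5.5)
We will write the true closed loop as the true system \eqref{ph_dynamics-concise} driven by the learned-model controller \eqref{eq:u-theta-f-theta}. The plan is to compare it with the ``ideal'' EB-PBC closed loop of Lemma~\ref{lemma:1} and collect every discrepancy into a residual $\xi$.

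\textbf{Step 1: split the control.} Denote by $\bs u^{\mathrm{id}}$ the ideal law \eqref{eq:u-theta-f}, built from the true $\bs F,\bs G,H$ with the same $V_\phi^*$ and $\bs K_\phi^*$; Assumption A4 guarantees the shaping is shared. Under the actual control $\bs u_\phi$ we then have
\[
\dot{\bs z}=\bs F\nabla H+\bs G\bs u^{\mathrm{id}}+\bs G(\bs u_\phi-\bs u^{\mathrm{id}}).
\]
By the computation in the proof of Lemma~\ref{lemma:1}, the first two terms equal $(\bs F-\bs G\bs K_\phi^*\bs G^\top)\nabla H_d$. Hence
\[
\dot H_d=-\nabla H_d^\top(\bs R+\bs G\bs K_\phi^*\bs G^\top)\nabla H_d+\nabla H_d^\top\bs G(\bs u_\phi-\bs u^{\mathrm{id}}).
\]
It then suffices to set $\xi:=\sup_{\mc O}|\nabla H_d^\top\bs G(\bs u_\phi-\bs u^{\mathrm{id}})|$ and to show that this quantity is $\mathcal O(\epsilon,\epsilon_R,\epsilon_G)$. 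This step assumes the matching PDEs hold for the true model, as in the hypotheses of Lemma~\ref{lemma:1}.

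\textbf{Step 2: bound the control mismatch.} This is the main obstacle. The difference $\bs u_\phi-\bs u^{\mathrm{id}}$ has two parts.

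The first is the damping part, $\bs K_\phi^*(\bs g^\top\nabla_{\bs p}H-\bs g_\theta^\top\nabla_{\bs p}H_\theta)$. Splitting it as $\bs G^\top(\nabla H-\nabla H_\theta)+(\bs G-\bs G_\theta)^\top\nabla H_\theta$ and using Lemma~\ref{lemma:3} together with A1 and A3 gives the bound $\|\bs K_\phi^*\|(\|\bs G\|\delta+\epsilon_G\bar\alpha_{\mc D})$.

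The second is the shaping part, $\bs G^\dagger\bs F^\top\nabla V_\phi^*-\bs G_\theta^\dagger\bs F_\theta^\top\nabla V_\phi^*$. Here I would use $\bs F_\theta-\bs F=-(\bs R_\theta-\bs R)$ (A2) and a perturbation bound for the pseudo-inverse. Because $\bs G$ has full column rank, $\|\bs G^\dagger-\bs G_\theta^\dagger\|\le C\epsilon_G$ for $\epsilon_G$ small relative to $\sigma_{\min}(\bs G)$, with $C$ depending on $\sigma_{\min}(\bs G)$ over the compact closure of $\mc O$. Combined with A3 this yields $\mathcal O(\epsilon_R+\epsilon_G)\,\bar\alpha_{\mc D}$.

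The prefactor $\|\nabla H_d^\top\bs G\|$ is bounded on $\mc O$ by continuity; choosing $\mc O$ with compact closure inside $\mc D$ makes all constants finite. This establishes the first inequality and the dissipativity-up-to-$\xi$ claim.

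\textbf{Step 3: use A5.} To invoke A5, which is stated for the learned quantities, I would write $Q:=\bs R+\bs G\bs K_\phi^*\bs G^\top$ and $Q_\theta:=\bs R_\theta+\bs G_\theta\bs K_\phi^*\bs G_\theta^\top$. Expanding $\nabla H_d=\nabla H_{\theta,\phi_d}+\bs e$ with $\|\bs e\|\le\delta$ (Lemma~\ref{lemma:3}) and $\|Q-Q_\theta\|\le\epsilon_R+\mathcal O(\epsilon_G)$ gives
\[
\nabla H_d^\top Q\nabla H_d\ge\nabla H_{\theta,\phi_d}^\top Q_\theta\nabla H_{\theta,\phi_d}-\mathcal O(\delta+\epsilon_R+\epsilon_G).
\]
The gradient bounds of A3 control the cross terms. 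Absorbing this extra term into $\xi$, which is allowed since it is still $\mathcal O(\epsilon,\epsilon_R,\epsilon_G)$, A5 yields $\dot H_d\le-\rho\|\bs z-\bs z_d\|^2+\xi+\varepsilon_{\mathrm{diss}}$.

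\textbf{Step 4: practical stability.} Since $\bs z_d$ is a strict local minimum of $H_d$, we shrink $\mc O$ so that it contains a compact sublevel set $\Omega_c=\{H_d\le c\}$ around $\bs z_d$. Let $B$ be the ball of radius $r=\sqrt{(\xi+\varepsilon_{\mathrm{diss}})/\rho}$, and assume $\xi+\varepsilon_{\mathrm{diss}}$ is small enough that $B\subset\Omega_c$. Outside $B$ we have $\dot H_d<0$, so $\Omega_c$ is forward invariant. Let $\Omega^\ast$ be the smallest sublevel set containing $B$. On the compact set $\Omega_c\setminus\mathrm{int}\,\Omega^\ast$ the decrease is uniform, so trajectories enter $\Omega^\ast$ in finite time. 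They cannot leave afterwards, because at its boundary points outside $B$ the derivative $\dot H_d$ is negative. This is a standard uniform-ultimate-boundedness argument.
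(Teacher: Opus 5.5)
Your argument is correct, but it is built around a different reference loop from the paper's.

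\textbf{The paper's route.} It never introduces $\bs u^{\mathrm{id}}$. It expands $\dot H_d=\langle\nabla H_d,\bs f\rangle$ around the \emph{learned} closed loop and evaluates $\langle\nabla H_{{\theta,\phi}_d},\bs f_\theta\rangle$ exactly by Lemma~\ref{lemma:2}. It then bounds the three cross terms by $C_2\delta$, $2\epsilon\bar{\alpha}_{\mc D}$ and $\delta\epsilon$, applying A1 directly at $\bs u=\bs u_\phi$. This needs no pseudo-inverse perturbation, so no margin $\epsilon_G<\sigma_{\min}(\bs G)$. It uses matching only for the learned model, which is implicit in $\bs u_\phi$ being an EB-PBC for $\bs f_\theta$. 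Its natural output is the learned quadratic form, so A5 applies immediately. The theorem's first inequality, however, is stated with the true form $\nabla H_d^\top(\bs R+\bs G\bs K_\phi^*\bs G^\top)\nabla H_d$. Reaching it requires a form-perturbation step that the paper handles only implicitly through its $C_1\epsilon_G$ bound.

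\textbf{Your route.} It is the mirror image. The exact identity $\dot H_d=-\nabla H_d^\top(\bs R+\bs G\bs K_\phi^*\bs G^\top)\nabla H_d+\bs y_d^\top(\bs u_\phi-\bs u^{\mathrm{id}})$ gives the true-form inequality directly. It also shows the residual as power injected through the passive port by a control error. You then pay the perturbation cost in Step~3 to reach A5.

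\textbf{The price of your route.} Step~1 needs matching for the \emph{true} plant, which cannot be checked when the plant is unknown. This is vacuous in the fully actuated examples, and in general you can drop it. Since $\nabla V_\phi^*$ has zero $\bs p$-component, the unmatched part of your ideal loop is $\bigl(\bs 0,(\bs I-\bs g\bs g^\dagger)\nabla_{\bs q}V_\phi^*\bigr)$. Learned-model matching gives $(\bs I-\bs g_\theta\bs g_\theta^\dagger)\nabla_{\bs q}V_\phi^*=\bs 0$. So this term is $\mathcal{O}(\epsilon_G)$ and can be absorbed into $\xi$.

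\textbf{A small fix in Step~4.} The decrease on $\Omega_c\setminus\mathrm{int}\,\Omega^\ast$ is not uniform where $B$ touches $\partial\Omega^\ast$. Instead, write $\Omega^\ast=\{H_d\le c^\ast\}$ and show two things. First, each $\{H_d\le c^\ast+\mu\}$ with $\mu>0$ is reached in finite time. Second, $\Omega^\ast$ is invariant, because $H_d>c^\ast$ forces $\bs z\notin B$ and hence $\dot H_d<0$.
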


\begin{proof}
Consider $\dot{\bs z}=\bs f(\bs z,\bs u_\phi)$ with Lyapunov function $H_d$. Then
\begin{align}
\dot{H}_d
&= \langle \nabla H_d, \bs f \rangle \notag\\
&= \langle \nabla H_{{\theta, \phi}_d}, \bs f_\theta \rangle
+ \langle \nabla H_d - \nabla H_{{\theta, \phi}_d}, \bs f_\theta \rangle + \langle \nabla H_{{\theta, \phi}_d}, \bs f - \bs f_\theta \rangle
+ \langle \nabla H_d - \nabla H_{{\theta, \phi}_d}, \bs f - \bs f_\theta \rangle.
\end{align}

From Lemma~\ref{lemma:2}, first term satisfies
\[
\langle \nabla H_{{\theta, \phi}_d}, \bs f_\theta \rangle
=
-\nabla H_{{\theta, \phi}_d}^\top
\bigl(\bs R_\theta + \bs G_\theta \bs K_\phi^* \bs G_\theta^\top\bigr)
\nabla H_{{\theta, \phi}_d}.
\]

Using $\|\bs R_\theta-\bs R\|\le \epsilon_R$ and $\|\bs G_\theta-\bs G\|\le \epsilon_G$, and boundedness of $\bs G$, $\bs G_\theta$, and $\bs K_\phi^*$ on $\mc O$, we have
\[
\|\bs G_\theta \bs K_\phi^* \bs G_\theta^\top - \bs G \bs K_\phi^* \bs G^\top\|
\le C_1 \epsilon_G.
\]
The second term 
\[
\langle \nabla H_d - \nabla H_{{\theta, \phi}_d}, \bs f_\theta \rangle
\le
C_2 \delta(\epsilon,\epsilon_R),
\]
for some constants $C_1, C_2 < \infty$ depending on bounds of $\bs f_\theta$, $\nabla H_{{\theta, \phi}_d}$, and $\bs K_\phi^*$ on $\mc D$.
The third and fourth terms satisfy
\begin{align*}
\langle \nabla H_{{\theta, \phi}_d}, \bs f - \bs f_\theta \rangle \le
2\epsilon \bar{\alpha}_{\mc D}, \quad \text{and} \quad
\langle \nabla H_d - \nabla H_{{\theta, \phi}_d}, \bs f - \bs f_\theta \rangle \le
\delta(\epsilon,\epsilon_R)\epsilon.
\end{align*}
Combining all terms,
\[
\dot H_d
\le
-\nabla H_{{\theta, \phi}_d}^\top
(\bs R_\theta + \bs G_\theta \bs K_\phi^* \bs G_\theta^\top)
\nabla H_{{\theta, \phi}_d}
+ \xi,
\]
where $\xi=\mathcal{O}(\epsilon,\epsilon_R,\epsilon_G)$.

If Assumption A5 holds, we get
% \[
% \nabla H_{\theta d}^\top
% (\bs R_\theta + \bs G_\theta \bs K_\phi^* \bs G_\theta^\top)
% \nabla H_{\theta d}
% \ge
% \rho \|\bs z-\bs z_d\|^2 - \varepsilon_{\mathrm{diss}}.
% \]
% Using Lemma~\ref{lemma:3} and continuity of the quadratic form under model perturbations,
% \[
% \nabla H_d^\top
% (\bs R + \bs G \bs K_\phi^* \bs G^\top)
% \nabla H_d
% \ge
% \rho \|\bs z-\bs z_d\|^2 - \varepsilon_{\mathrm{diss}} - \xi.
% \]
% Thus,
\[
\dot H_d
\le
-\rho \|\bs z-\bs z_d\|^2 + \xi + \varepsilon_{\mathrm{diss}}.
\]
Hence, trajectories are driven to the smallest sublevel set of $H_d$ containing the stated neighborhood and remain in this sublevel set thereafter, establishing local practical asymptotic stability.
\end{proof}

\begin{remark}[Role of $\rho$ and practical stability]
The parameter $\rho$ acts as a tunable dissipation gain in the learned closed-loop system and plays a central role in determining the size of the ultimate bound. In particular, increasing $\rho$ tightens the ultimate bound and improves convergence toward the desired equilibrium.
\end{remark}

\section{Simulation Study} \label{sec:Experiments}
In this section, we discuss the details of the simulation study and the results.

\subsection{{Simulated System and Data Generation}}
We validate the proposed learning framework by learning an optimal EB-PBC controller for stabilization and swing-up control of a planar pendulum and a torsional pendulum system from their trajectory data. To show the efficacy and generalization of our approach, we choose the mass, length, and input matrix of the planar pendulum from a uniform distribution $\mb U(0,2)$. This is done to ensure the robustness of our approach to the pendulum parameters. Dissipation $\bs D$ is set to $0$ for the planar pendulum experiments. The observed trajectory data (true system data) is generated using the true Hamiltonian function \eqref{hamiltonian}. $512$ initial conditions for training trajectories (for both system and control policy) are sampled from a random uniform distribution of angles and angular velocities $\bs q_0, \dot{\bs q}_0$ as $\mathbb{P}_{\bs z_0} \sim \mathbb{U}([-2\pi, 2\pi] \times [-\pi, \pi])$.

\subsection{{Training Details}}
For learning the system model, we minimize the integrated error over training trajectories. For each initial condition $(\bs z_0, t_0)$ in the training dataset, the $\mc L_2$ loss is calculated as $\mathbb{L} = \|\dot{\bs Z} - \dot{\hat{\bs Z}}\|^2_2$, where $\hat{\bs Z} = \texttt{NODESolve}(\bs z_{t_0}, \bs f_\theta, t)$ comes from the learned model $\bs f_\theta$, and we obtain $\dot{\hat{\bs Z}}$ directly using $\bs f_\theta$, and $\{\bs Z, \dot{\bs Z}\} = (\bs z_t, \dot{\bs z}_t, u_t)_{t=t_0}^{t_n}$ is the (observed) trajectory data generated from the true system model (equivalent to getting data from the true system). The training trajectories were only $T=0.15s$ long, whereas the learned model is evaluated for $3s$ to show the efficacy and generalization well beyond the training trajectory lengths.
The cost function used to learn the stabilization control policy is
\begin{align} \label{policy-cost-full}
    \mb E_{\bs z_0}\left[l_{\bs u}(\bs z_0)\right] =  -\frac{1}{N} \sum_{i=1}^N \textrm{log} \Upsilon^*\left(\bs z_T(\bs z_0^i, \bs u_{\phi})\right) +\frac{\eta}{N} \sum_{i=1}^N \int_0^T \left| \bs u_{\phi}(\bs z_t(\bs z_0^i, \bs u_{\phi})) \right| \textrm{d}t,
\end{align}
where the target system configuration is defined as a probability set $\Upsilon^*(\bs z)$ centered at $(q^*, p^*)$ with variance $\sigma^2 \bs I$, and $\eta > 0$. The system trajectories are generated by integrating the initial conditions (sampled from $\mb P_{\bs z_0}$) through a NODE solver, and the cost gradient d$\mb L_{u}/$d$\phi$ was computed by backpropagating using the adjoint method \citep{chen2018neural}. For the swing-up control, we augment the control policy cost function as
\begin{align}   \label{eq:swing_up_cost}
    \mb E_{\bs z_0} \left[l_{\bs u}(\bs z_0)\right] =  \frac{1}{N} \sum_{i=1}^N \left( (\bs z_0^i - \bs z^*)\tran Q (\bs z_0^i - \bs z^*) \right) +\frac{\eta}{N} \sum_{i=1}^N \int_0^T \left[ \bs u_{\phi}(\bs z_t(\bs z_0^i, \bs u_{\phi})) \right]^2 \textrm{d}t,
\end{align}
where $Q$ was set to $\texttt{diag}([100,10])$. For the dissipation regularization term, we used $\subscr{\lambda}{diss} = 1$, and $\rho = 1e-3$.

Unless specified otherwise, the ODE solver used for NODE is the fourth-order Runge-Kutta (\texttt{rk4}), discretization step used is $10^{-2}$ for system training and $2\times 10^{-2}$ for controller training, \textsc{Adam} optimizer used for gradient descent step with the learning rate set to $10^{-3}$ and weight decay to $10^{-4}$. Learning rate cosine annealing \citep{loshchilov2017sgdr} is used for training the system model $\bs f_\theta$. Other hyperparameter values used are: $\eta = 10^{-3}, \sigma^2 = 10^{-3}$.

\begin{figure}
    \centering
    \begin{subfigure}{0.23\linewidth}
	    \centering
        \includegraphics[width=1\linewidth, height=1.1\linewidth, keepaspectratio]{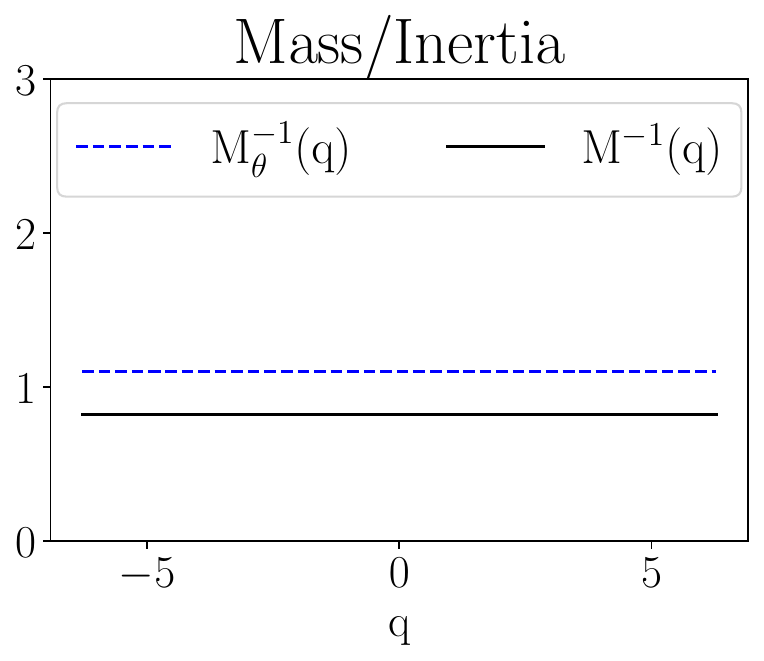}
        \caption{}
        \label{mass}
    \end{subfigure} \hspace{-1.3em}
    ~
    \begin{subfigure}{0.24\linewidth}
	    \centering
        \includegraphics[width=1\linewidth, height=1.1\linewidth, keepaspectratio]{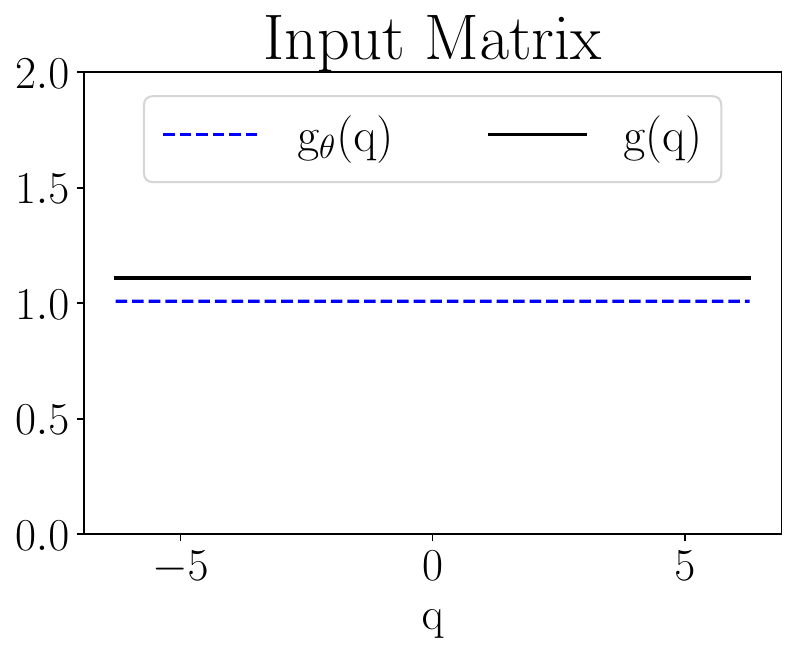}
        \caption{}
        \label{gq}
    \end{subfigure} \hspace{-1.3em}
    ~
    \begin{subfigure}{0.25\linewidth}
	    \centering
        \includegraphics[width=1\linewidth, height=1.1\linewidth, keepaspectratio]{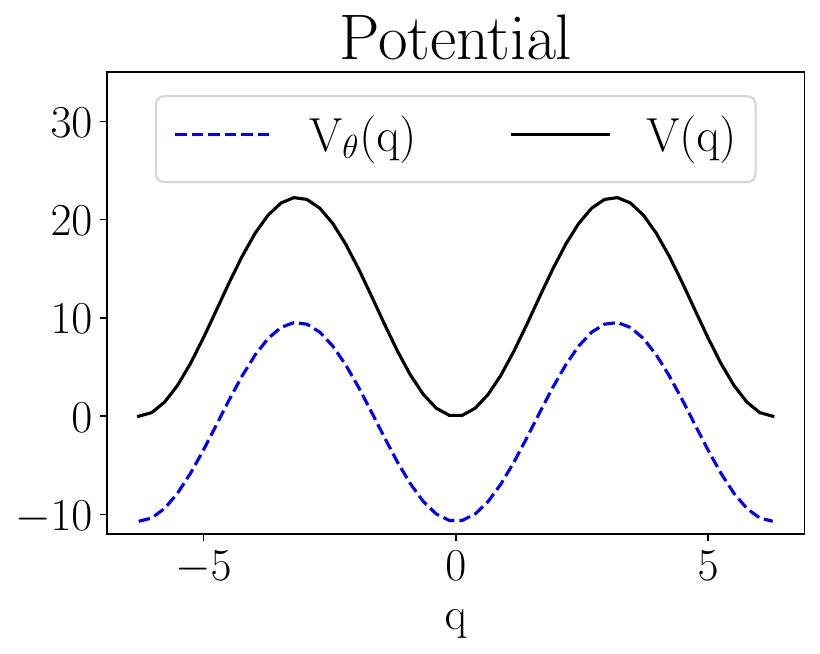}
        \caption{}
        \label{v_chnn}
    \end{subfigure} \hspace{-1.3em}
    ~
    \begin{subfigure}{0.25\linewidth}
	    \centering
        \includegraphics[width=1\linewidth, height=1.1\linewidth, keepaspectratio]{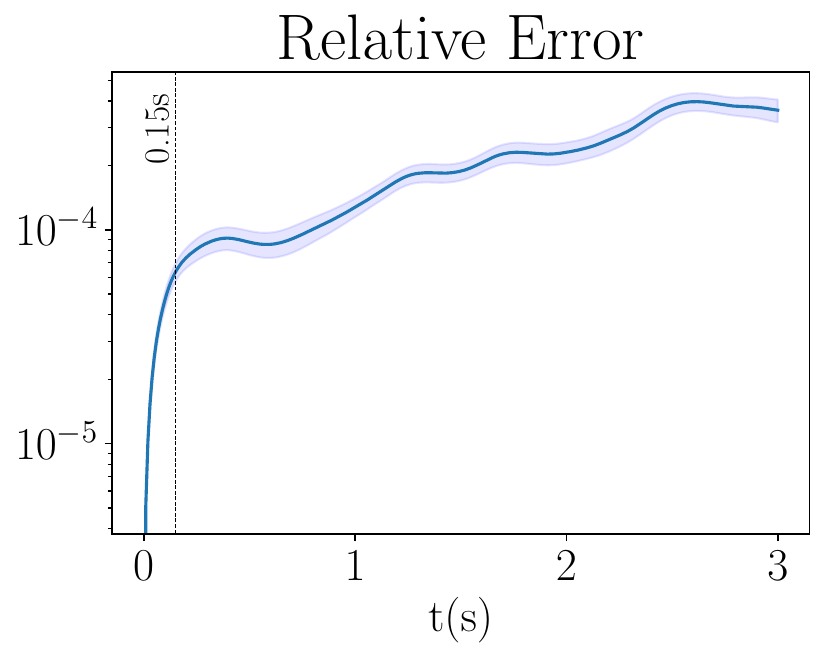}
        \caption{}
        \label{fig:rel_err}
    \end{subfigure} \hspace{-1.3em}
    ~
    \caption{\textbf{Learned system matrices for $1$-link planar pendulum:} (a),(b), and (c) show that $M_{\theta_1}, V_{\theta_2}(q), g_{\theta_3}(q)$ match the ground truth with an offset (invariance owing to training on $\dot{q}$ instead of $p$). (d) The relative error in states shows rollout for $3s$ with $95\%$ confidence bands; the y-axis is log-scaled for clearer representation.}
    \label{fig:f-theta}
\end{figure}

\subsection{{NN Architectures}}
Here, we detail the different NN architectures used. All NNs used are fully connected, and we describe them in the following format. 
\[
\subscr{d}{in}-n_1\supscr{f}{act}_1 - \cdots -n_N\supscr{f}{act}_N-\subscr{d}{out},
\]
where $\subscr{d}{in}$ and $\subscr{d}{out}$ are input and output dimensions, $n_i$ is the dimension of layer $i$ and $\supscr{f}{act}_i$ is the associated activation function.
\begin{itemize}
    \item $M_{\theta_1}^{-1}= L_{\theta_1}\tran L_{\theta_1}: 2 - 300\ \mathrm{tanh} - 300\ \mathrm{tanh} - 300\ \mathrm{tanh} - 1$
    \item $g_{\theta_2}: 2 - 400 \; \mathrm{tanh} - 400 \;\mathrm{tanh} - 1$
    \item $V_{\theta_3}: 2 - 50\; \mathrm{tanh} - 50 \;\mathrm{tanh} - 1$
    \item $V^*_{\phi}: 1 - 64 \;\mathrm{softplus} - 64 \;\mathrm{softplus} - 64\; \mathrm{tanh} - 1$
    \item $K^*_{\phi}: 3 - 64 \; \mathrm{softplus} - 64\; \mathrm{softplus} - 1 \;\mathrm{softplus}$
\end{itemize}
The $\mathrm{softplus}$ activation is appended to the output of $K^*_{\phi}$ to enforce the positive definiteness of the learned desired damping matrix.

\subsection{Stabilization and Swing-up Control of Planar Pendulum System}

To show the convergence of a planar pendulum system parameterized by NNs, we compare learned system matrices: $M_{\theta_1}^{-1}(q), V_{\theta_2} (q), g_{\theta_3}(q)$ with the ground truth in Fig.~\ref{fig:f-theta}. The learned $M_{\theta}$, $g_{\theta}(q)$, and $V_\theta(q)$ match the ground truth with a constant offset, which is not an issue for control as we still learn the $q$ dynamics correctly. The offset in the learned $V_{\theta}(q)$ is not an issue either since the potential is a relative notion and only the derivative of $V_{\theta}$ plays a role in the dynamics of the system.
Error in prediction trajectories (simulated for a much longer time than the training trajectories) is computed as a relative error defined by $\mathrm{Err}(t) = \norm{\bs z(t) - \hat{\bs z}(t)}_2 / \norm{\bs z(t) + \hat{\bs z}(t)}_2$. Fig.~\ref{fig:rel_err} shows the relative error in states averaged over $512$ initial conditions sampled from $\mb P_{\bs z_0}$. The shaded regions are $95\%$ confidence intervals, and the dotted line (at $0.15$s) shows the length of the training trajectories. It can be seen that the trained model can accurately predict trajectories for a much longer horizon than the duration of training trajectories. 
Phase plots in Fig. \ref{fig:eb-pbc} show that the resulting closed-loop system is stabilized by the proposed controller for the target configuration $(q^*, p^*) = (0,0)$ (here the pendulum angle $q$ is measured relative to the negative $y$-axis).
\begin{figure}[h!]
    \centering
    \includegraphics[width=0.5\linewidth]{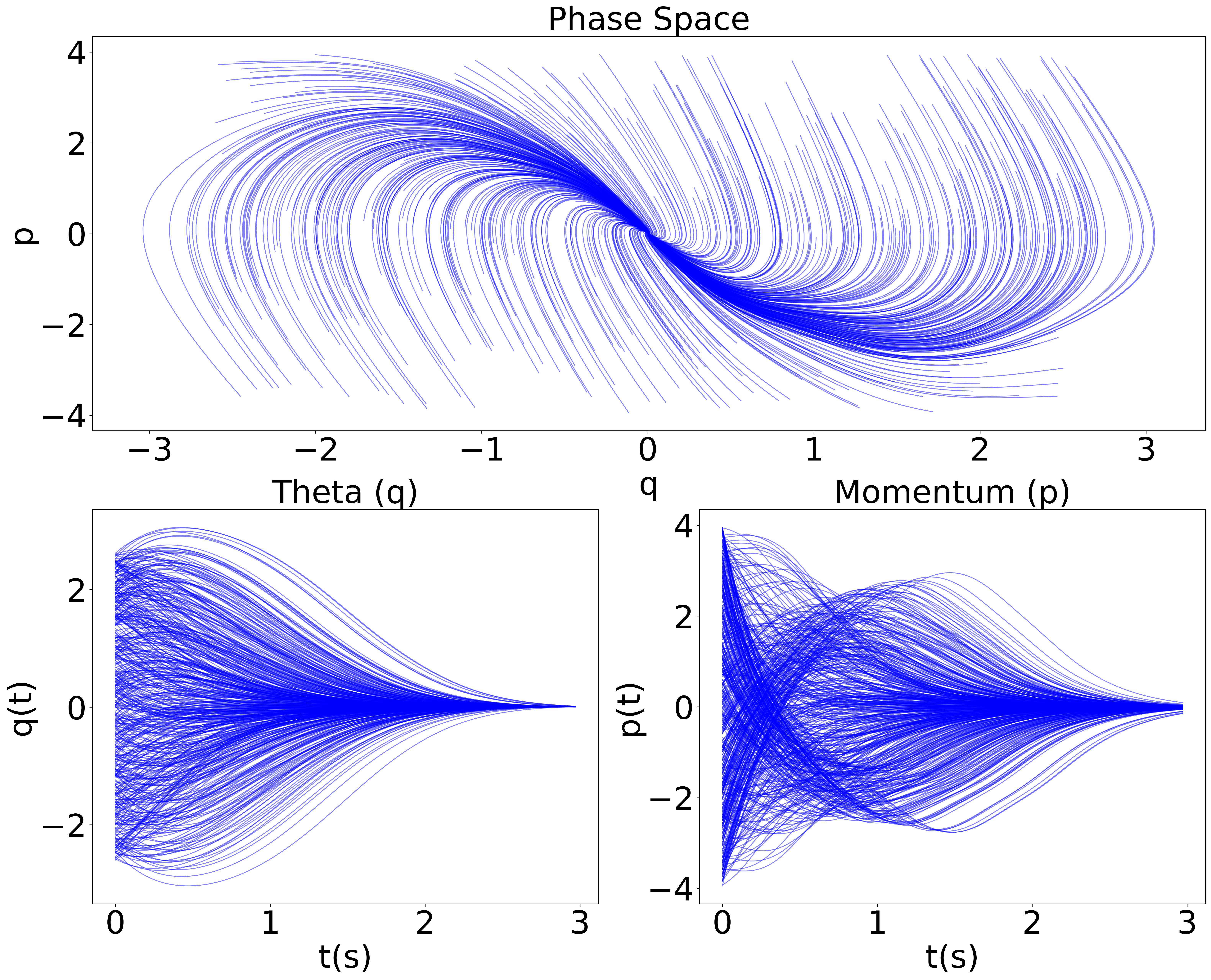}
    \caption{\textbf{State-Regulation by Optimal EB-PBC on $1$-link Planar Pendulum:} Phase plots of the learned system trajectories regulated to $\bs z=[0,0]$ for $512$ initial conditions sampled from the distribution $\mb P_{\bs z_0}$.
    }
    \label{fig:eb-pbc}
\end{figure}
\begin{figure}[h!]
    \centering
    \begin{subfigure}{0.49\linewidth}
	    \centering
        \includegraphics[width=1\linewidth, height=1\linewidth, keepaspectratio]{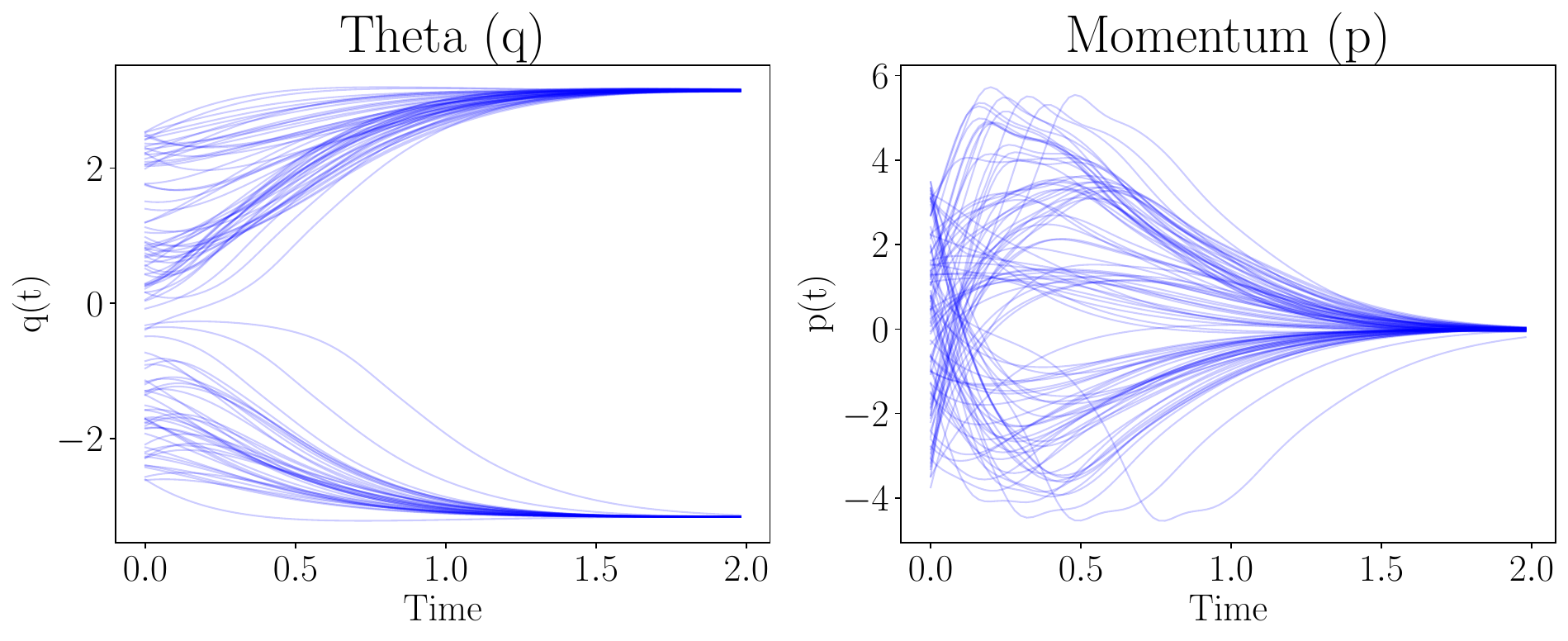}
        \caption{}
        \label{SU_PIDPG}
    \end{subfigure} \hspace{-1.3em}
    \begin{subfigure}{0.49\linewidth}
	    \centering
        \includegraphics[width=1\linewidth, height=1\linewidth, keepaspectratio]{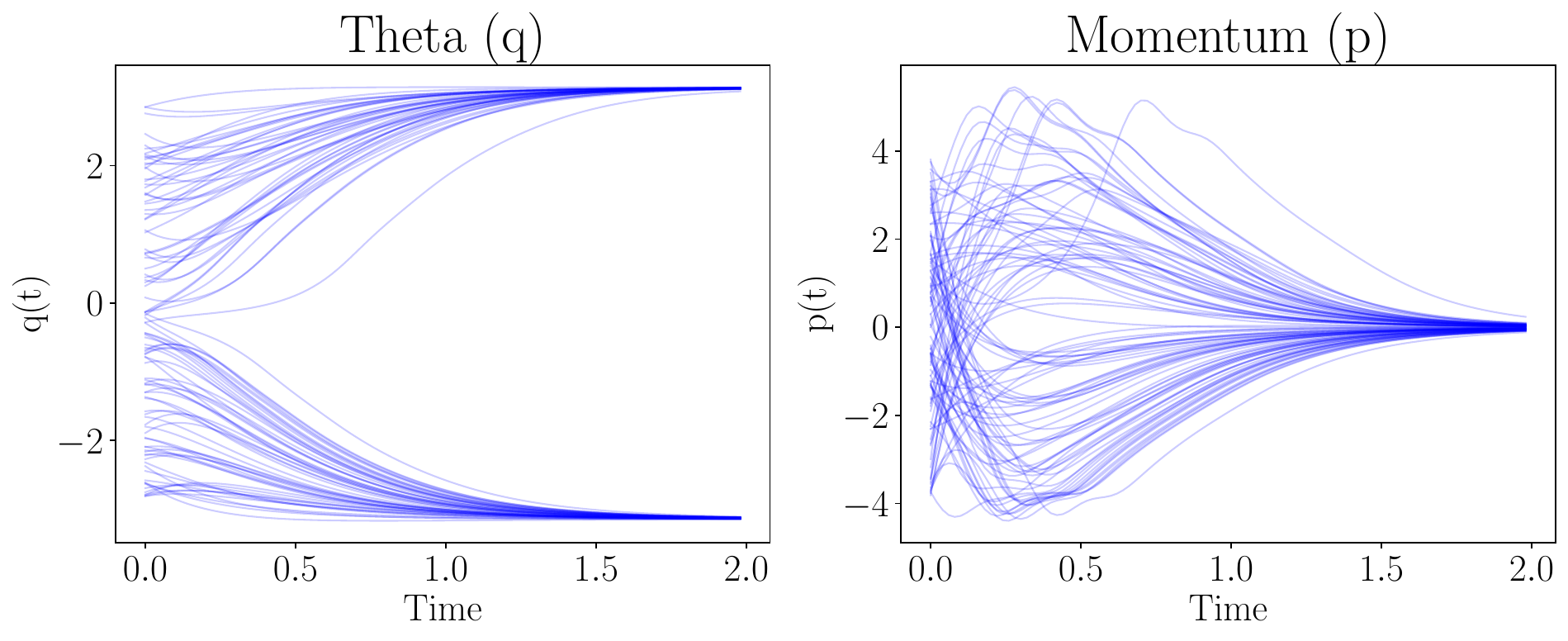}
        \caption{}
        \label{SU_PH}
    \end{subfigure}
    ~
    \caption{\textbf{Swing-up Optimal EB-PBC for $1$-link Planar Pendulum:} Optimal EB-PBC controlled system trajectories for \textbf{(a)} learned system, and \textbf{(b)} true system for $100$ initial conditions sampled from $\mb P_{\bs z_0}$}
    \label{fig:swing_up}
\end{figure}
\begin{figure}[h!]
    \centering
    \includegraphics[width=0.5\linewidth, height=0.8\linewidth, keepaspectratio]{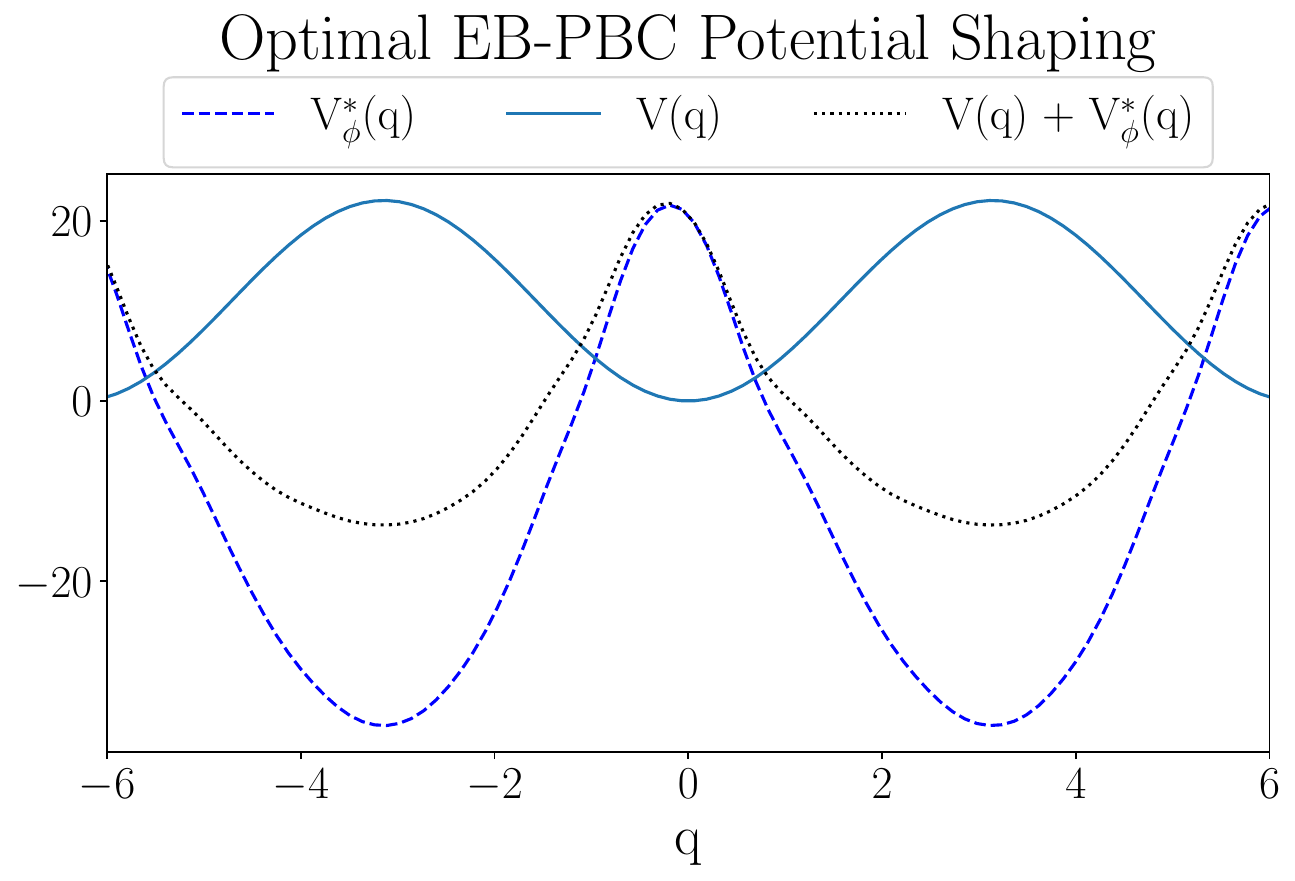}
    \caption{\textbf{Energy Shaping for Swing-up Control of $1$-link Planar Pendulum:} The proposed framework learns the added desired potential $V_{\phi}^*$ to place the minimum at $\pm\pi$ for pendulum swing-up.}
    \label{fig:ES_potential}
\end{figure}
Fig. \ref{SU_PIDPG} shows the time evolution of states $(q, p)$ for 100 initial conditions sampled from $\mb P_{\bs z_0}$ to stabilize the learned pendulum system at the inverted position, whereas Fig. \ref{SU_PH} shows the optimal EB-PBC performance on the true pendulum model. These phase plots show that the proposed framework drives the pendulum to the inverted position from a wide range of initial conditions.
Fig. \ref{fig:ES_potential} shows that the learned added potential $V_{\phi}^*$ reshapes the total energy landscape so that the resulting potential has its minimum at $\pm\pi$, the upright position. Although the training data has trajectory samples in the range $(-\pi,\pi)$, $V_{\phi}^*$ extrapolates well.
\begin{figure}[h!]
    \centering
    \begin{subfigure}{0.45\linewidth}
	    \centering
        \includegraphics[width=1\linewidth, height=1.1\linewidth, keepaspectratio]{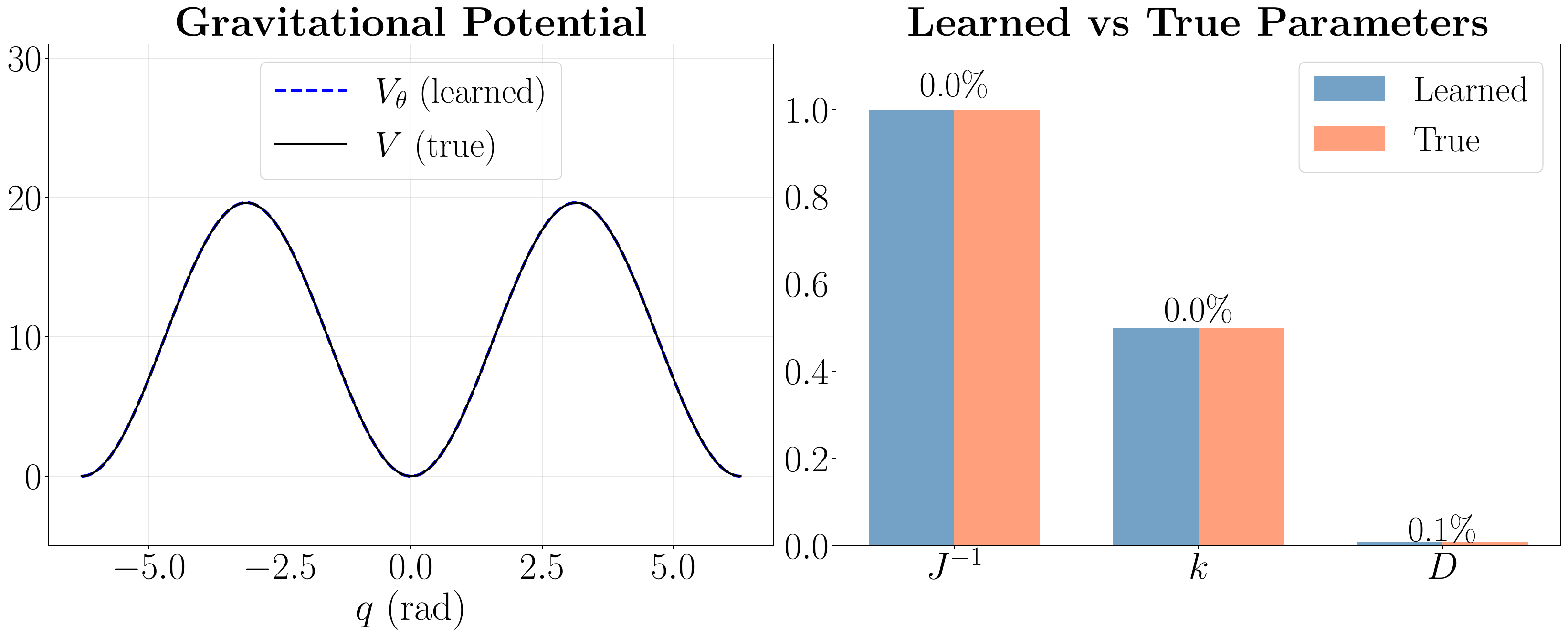}
        \caption{}
        \label{fig:tpend_symoden}
    \end{subfigure}
    ~
    \begin{subfigure}{0.49\linewidth}
	    \centering
        \includegraphics[width=1\linewidth, height=1.1\linewidth, keepaspectratio]{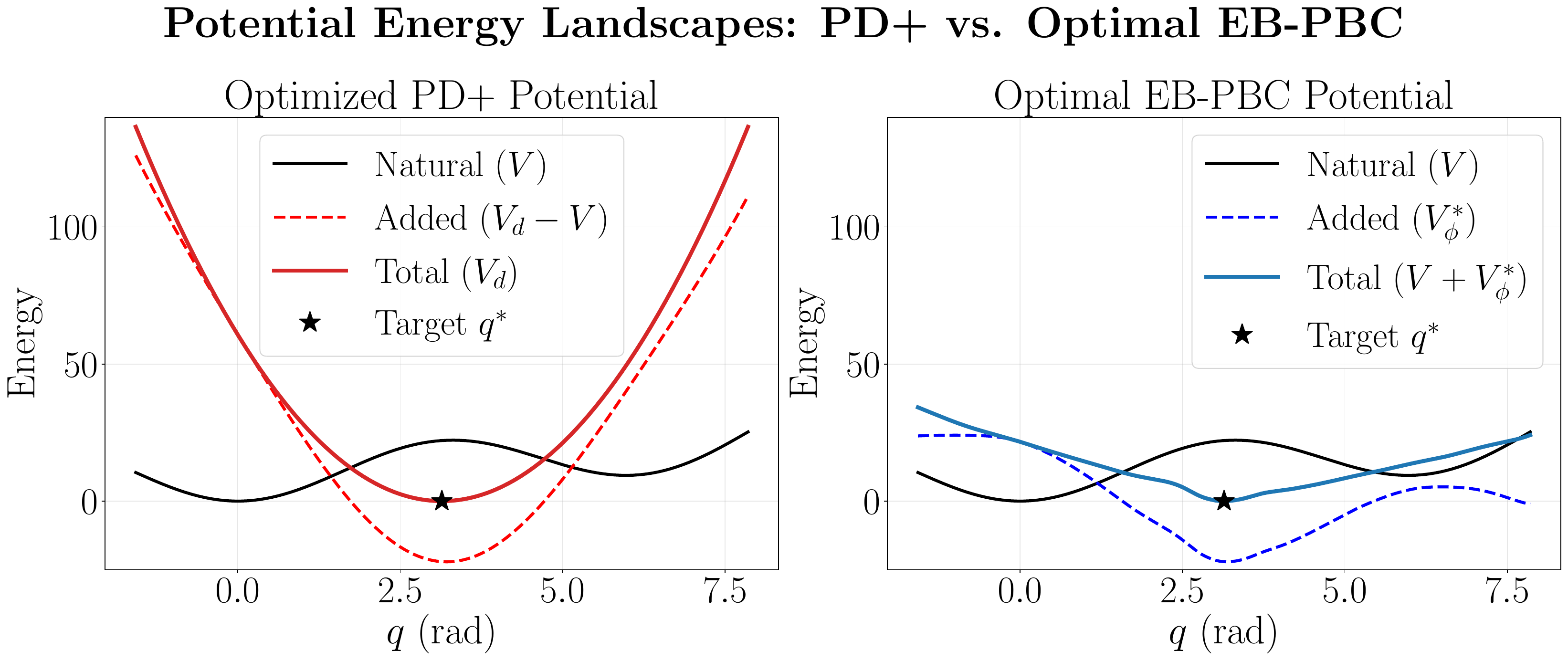}
        \caption{}
        \label{fig:oes_pd_potential}
    \end{subfigure} \\
    ~
    \begin{subfigure}{1\linewidth}
	    \centering
        \includegraphics[width=1\linewidth, height=1.1\linewidth, keepaspectratio]{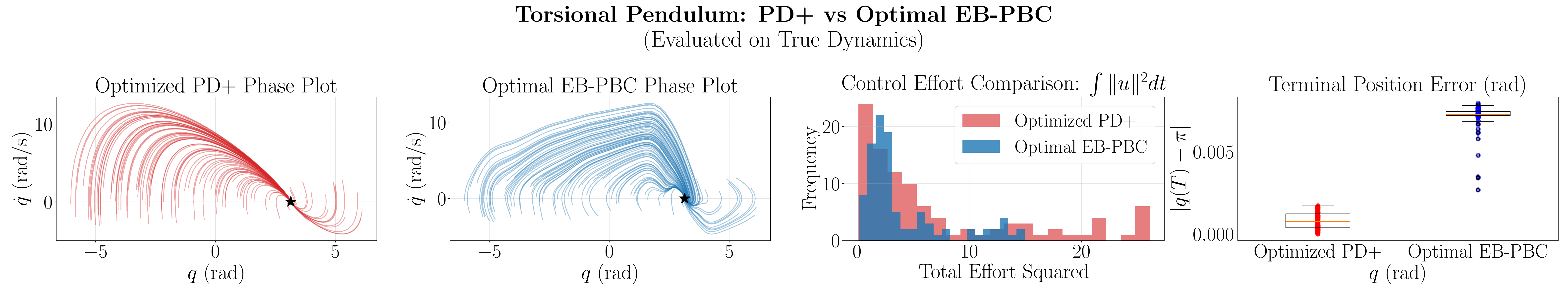}
        \caption{}
        \label{fig:oes_pd_traj}
    \end{subfigure}
    \caption{\textbf{Optimal EB-PBC on $1$-link Torsional Pendulum:} (a) shows the learned pH system with perfectly converged system parameters. Reshaped potential landscape in (b) shows how optimal EB-PBC learns to utilize the natural potential to achieve the minimum at $q^* = \pi$ compared to the PD+ controller that cancels and adds a quadratic potential. (c) The performance of optimal EB-PBC versus the optimized PD+ control for swing-up of the torsional pendulum shows how optimally reshaped potential leads to massive savings in control effort at the expense of slightly larger terminal position error.}
    \label{fig:oes_pd_comparison}
    \vspace{-0.05in}
\end{figure}
\begin{figure}[h!]
    \centering
    \includegraphics[width=1\linewidth]{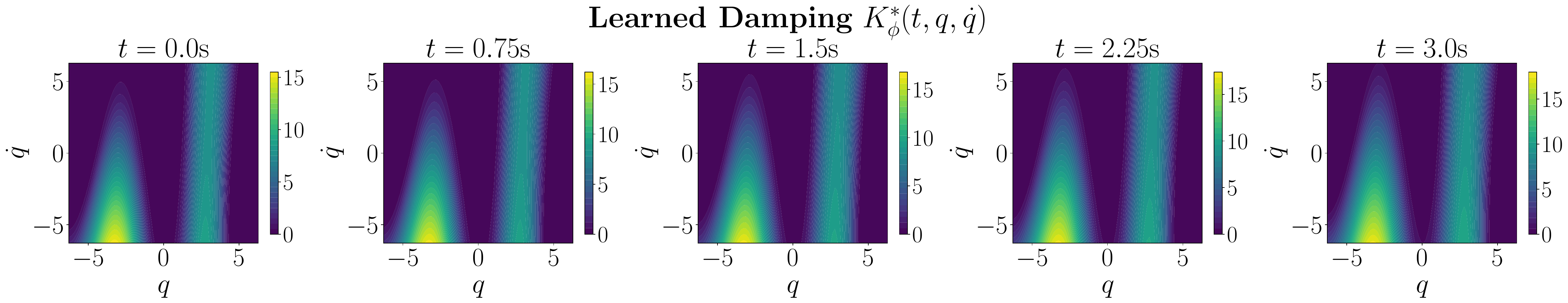}
    \caption{Snapshots of the learned state-dependent damping $K_\phi^*(t,q, \dot{q})$ over time for swing-up control of torsional pendulum: Rather than applying a constant uniform damping, the optimal EB-PBC dynamically adapts the dissipation profile, allowing the system to build momentum where necessary, and selectively increasing damping near the target equilibrium $(q^*=\pi, \dot{q}^*=0)$ to ensure smooth asymptotic stabilization.}
    \label{fig:oes_damping}
\end{figure}
\subsection{Comparing Optimal EB-PBC with PD + Potential Compensation for Swing-up Control of Torsional Pendulum}
We now show the performance of the optimal EB-PBC controller for the swing-up of a torsional pendulum. The Hamiltonian of this system is
\begin{equation}
    H(q, \dot{q}) = \frac{1}{2}J\dot{q}^2 + mgr(1 - \cos q) + \frac{1}{2}kq^2,
\end{equation}
with $J=1$, $m=1, r=1, k = 0.5$, and the damping coefficient $D=0.01$. Fig.~\ref{fig:tpend_symoden} shows the learned system parameters and the natural gravitational potential landscape. The constant offset in learned model parameters is resolved by assuming knowledge of the system potential at $q=0$ and anchoring the learned $V_\theta$. We then compare our proposed optimal EB-PBC control to the classic PD with potential compensation controller (PD+). We optimized the hyperparameters of the PD+ controller for a fair comparison. The PD+ control law is given by
\begin{equation}
    u_\phi(t, \bs z) = -mgr \sin q-kq + k_p(q - q^*) - k_d \dot{q},
\end{equation}
where we optimize $(k_p, k_d)$ with the swing-up cost \eqref{eq:swing_up_cost}. All the other hyperparameters were the same as for the optimal EB-PBC case. The optimal parameters for PD+ controller were obtained as $k_p=12.29, k_d = 6.30$. Fig.~\ref{fig:oes_pd_potential} shows how optimal EB-PBC learns to take advantage of the spring potential and gravity to considerably lower the control effort for swing-up as compared to PD+ case, which simply cancels the natural system potential and adds a quadratic potential around $q^*=\pi$. Fig.~\ref{fig:oes_pd_traj} shows the phase plot, total control effort (binned across various initial conditions across the complete trajectory), and terminal position error for both controllers. The mean effort expended by PD+ control was $7.10$ compared to $4.03$ for optimal EB-PBC for swing-up control of the torsional pendulum system. However, the proposed controller exhibits a slightly larger terminal position error, possibly owing to the flatter total potential landscape. The learned damping injection coefficient also has a non-trivial distribution as a function of system states and time. Contrary to the PD+ control, which assumes constant damping, the optimal EB-PBC controller damping depends on time (giving a feedforward control component), the sign of the velocity (asymmetric around $\dot{q}=0$), and the position, which is guided by the learned potential landscape, as shown in Fig.~\ref{fig:oes_damping}. This substantially generalizes the damping-injection implementation in PD+ and vanilla EB-PBC controllers.

\subsection{Comparing Optimal EB-PBC with PD + Potential Compensation for Swing-up Control of $2$-Link Torsional Pendulum}
To evaluate the proposed framework on a higher-dimensional system with rich passive elements, we compare the optimal EB-PBC controller against an optimized standard EB-PBC for a $2$-link rigid pendulum with parallel joint elasticity. The standard EB-PBC baseline is constructed by strictly compensating the natural gravitational and elastic potentials and imposing a quadratic desired potential, with its derivative and proportional gains optimized to minimize the same target cost function~\eqref{eq:swing_up_cost}.

% \subsubsection{Port-Hamiltonian System Learning}
The Hamiltonian of this system, expressed in terms of the generalized coordinates $\bs q=[q_1, q_2]\tran$ and velocities $\dot{\bs q}=[\dot{q}_1, \dot{q}_2]\tran$, is given by
\begin{equation}
    H(\bs q, \dot{\bs q}) = \frac{1}{2} \dot{\bs q}\tran \bs M(\bs q) \dot{\bs q} + V_g(\bs q) + V_e(\bs q),
\end{equation}
where the kinetic energy is characterized by the symmetric positive-definite inertia matrix $\bs M(\bs q) \in \real^{2\times 2}$ with elements
\begin{align}
    \bs M_{11}(\bs q) &= m_1 l_{c_1}^2 + m_2(l_1^2 + l_{c_2}^2) + I_1 + I_2 + 2m_2l_1l_{c_2}\cos(q_2), \\
    \bs M_{12}(\bs q) = \bs M_{21}(\bs q) &= m_2l_{c_2}^2 + I_2 + m_2l_1l_{c_2}\cos(q_2), \\
    \bs M_{22}(\bs q) &= m_2l_{c_2}^2 + I_2,
\end{align}
where $m_i, l_i, l_{c_i},$ and $I_i$ denote the mass, length, center of mass position, and moment of inertia of link $i\in\{1,2\}$, respectively.

The gravitational potential energy $V_g(\bs q)$ is defined as
\begin{equation}
    V_g(\bs q) = -(m_1l_{c_1}+m_2l_1)g\cos(q_1) - m_2l_{c_2}g\cos(q_1 + q_2).
\end{equation}
The elastic potential energy $V_e(\bs q)$ introduced by parallel torsional springs is
\begin{equation}
    V_e(\bs q) = \frac{1}{2}K_{s1}(q_1 - \subscr{q}{eq1})^2 + \frac{1}{2}K_{s2}(q_2 - \subscr{q}{eq2})^2,
\end{equation}
where $K_{s_i}$ represents the stiffness of the torsional spring at joint $i$ and $\subscr{q}{eq i}$ denote the corresponding equilibrium resting angle.
\begin{figure}[h!]
    \centering
    \begin{subfigure}{1\linewidth}
        \centering
        \includegraphics[width=1\linewidth, keepaspectratio]{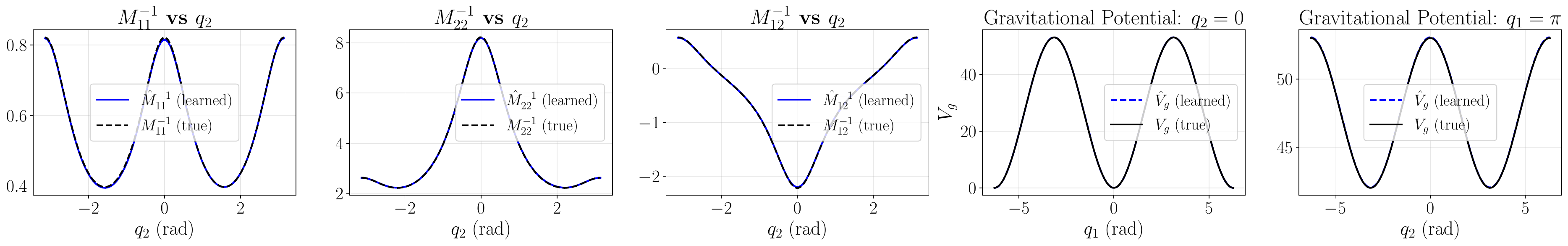}
        \caption{}
        \label{fig:M_V_2link}        
    \end{subfigure} \hspace{-1.3em}
    \\
    \begin{subfigure}{0.5\linewidth}
        \centering
        \includegraphics[width=1\linewidth, keepaspectratio]{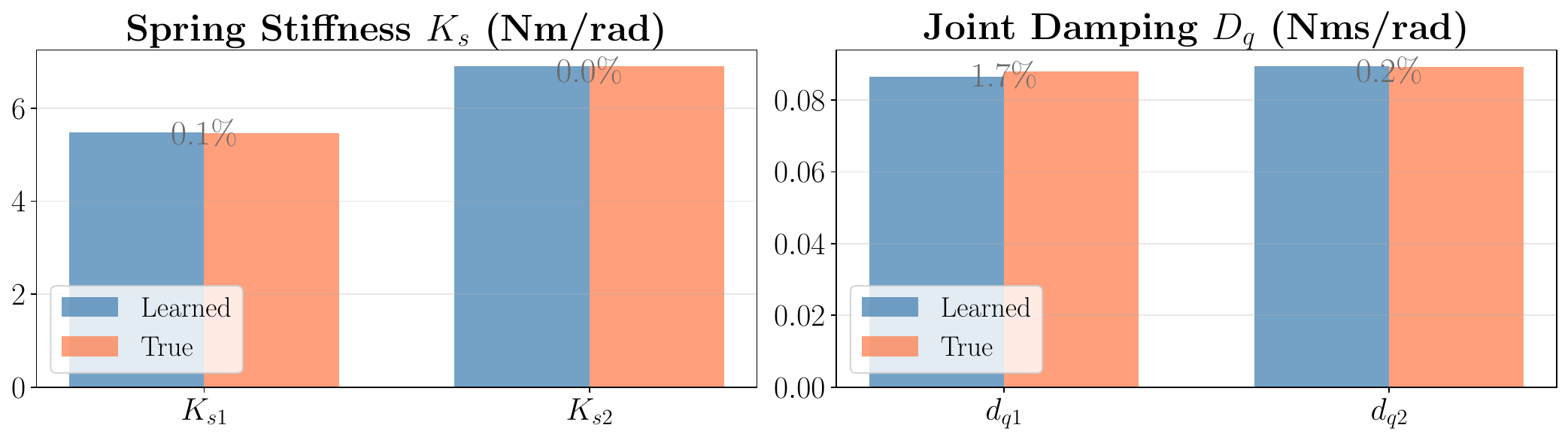}
        \caption{}
        \label{fig:scalar_2link}        
    \end{subfigure} \hspace{-0.7em}
    \begin{subfigure}{0.5\linewidth}
        \centering
        \includegraphics[width=1\linewidth, keepaspectratio]{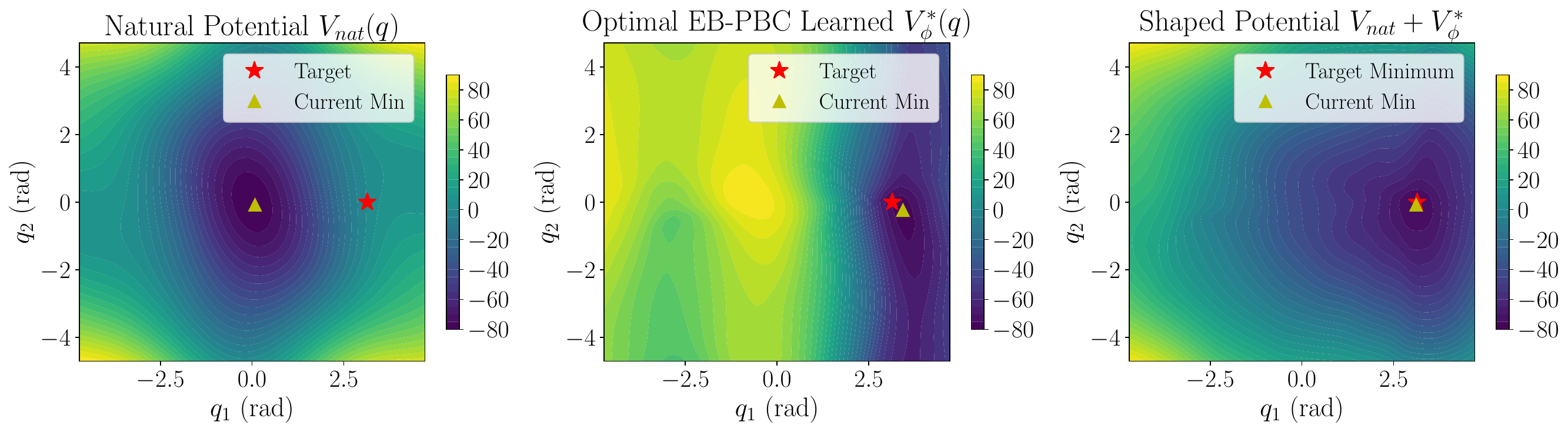}
        \caption{}
        \label{fig:oes_potential_2link}        
    \end{subfigure} \hspace{-3em}
    \\
    \begin{subfigure}{1\linewidth}
        \centering
        \includegraphics[width=1\linewidth, keepaspectratio]{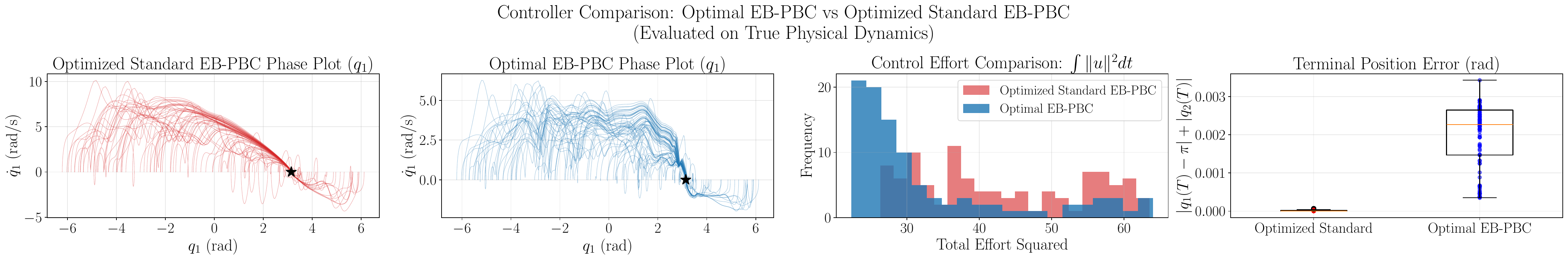}
        \caption{}
        \label{fig:baseline_comparison_2link}      
    \end{subfigure}
    \caption{\textbf{Optimal EB-PBC on $2$-link Torsional Pendulum:} (a) and (b) show the learned pH system with perfectly converged system parameters. Reshaped potential landscape in (c) shows how optimal EB-PBC learns to utilize the natural potential to achieve the minimum at $\bs q^* = [\pi, 0]$. (d) The performance of optimal EB-PBC versus the optimized standard EB-PBC control for swing-up of the torsional pendulum shows how optimally reshaped potential along with time and state-dependent damping leads to massive savings in control effort at the expense of slightly larger terminal position error.}
\end{figure}
Figs.~\ref{fig:M_V_2link},~\ref{fig:scalar_2link} show the mass/inertia matrix, the natural gravitational potential landscape, and the learned system parameters. The constant offset in learned model parameters is resolved by assuming knowledge of the system potential at $\bs q=\bs 0$ and anchoring the learned $V_\theta$. 
%
%

% \subsubsection{Control Performance}
We then compare our proposed optimal EB-PBC control to the standard EB-PBC. We optimized the hyperparameters of the standard EB-PBC for a fair comparison. The optimal parameters for the standard EB-PBC were obtained as $k_p=[28.9, 12.8],\ k_d=[20.42, 0.96]$.
Both controllers successfully stabilize the $2$-link torsional pendulum system at the target configuration $\bs z^* = [\pi, 0, 0, 0]$. However, Figs.~\ref{fig:oes_potential_2link},~\ref{fig:baseline_comparison_2link} show how the proposed controller is able to exploit the system's intrinsic passive dynamics to lower the control effort required to achieve the swing-up.
This performance disparity highlights the fundamental inefficiency inherent in strict potential compensation techniques. The standard EB-PBC wastes significant motor torque by actively canceling the natural torsional elasticity and gravitational pull of the links to enforce an arbitrary quadratic potential. In contrast, the proposed framework bypasses perfect cancellation. It learns an added potential $V_\phi^*(\bs q)$ that reshapes the total energy landscape to place the global minimum at the target, while naturally cooperating with the stored elastic energy of the torsional springs to ``swing" the links toward the target. Notably, mean effort expended for swing-up by our proposed controller is $32.62$ compared to $43.04$ for the standard EB-PBC, achieving a $24.02\%$ control energy savings. The trade-off is a slightly higher terminal position error for our proposed controller.

\section{Conclusions and Future Directions}\label{sec:conclusion}
In this work, we proposed a physics-informed learning framework for optimal energy-balancing passivity-based control of port-Hamiltonian systems. The system model and control policy are jointly learned through alternating optimization with policy-aware data collection. The pH model is updated using a mixture of step-excited and policy-excited trajectory data, concentrating model capacity on the closed-loop operating region, and the EB-PBC policy is re-optimized on the improved model. Both components are parameterized by NNs, the system via structured pH matrices, and the policy via an added potential and state-dependent damping, yielding an interpretable controller expressed as an interplay between the system’s natural and shaped energies. We established deterministic stability guarantees for the learned controller on the true system, with an explicit estimate of the size of the residual neighborhood as a function of model training error. A dissipation regularization mechanism enforces strict energy decay throughout training, strengthening robustness to sim-to-real gaps. The framework was validated on stabilization and swing-up tasks for a $1$-link planar, a $1$-link torsional pendulum, and a $2$-link torsional pendulum. 
Future work will investigate sensitivity to hyperparameters and system variations, and extend the framework to higher-dimensional systems such as flexible-joint manipulators and multi-agent systems. Another possible avenue is to integrate our proposed framework with image-based system identification techniques~\citep{mason2023learning,zhong2020unsupervised}.

\footnotesize 
\bibliographystyle{ieeetr}
\bibliography{pidpg}

@article{massaroli2022optimal,
    author = {Massaroli, Stefano and Poli, Michael and Califano, Federico and Park, Jinkyoo and Yamashita, Atsushi and Asama, Hajime},
    title = {Optimal Energy Shaping via Neural Approximators},
    journal = {SIAM Journal on Applied Dynamical Systems},
    volume = {21},
    number = {3},
    pages = {2126-2147},
    year = {2022},
    doi = {10.1137/21M1414279}
}

@book{schaft1999l2,
  title={$L_2$-Gain and Passivity Techniques in Nonlinear Control},
  author={Van der Schaft, Arjan},
  year={2000},
  publisher={Springer}
}

@inproceedings{
Zhong2020Symplectic,
title={Symplectic {ODE-Net}: Learning {H}amiltonian Dynamics with Control},
author={Yaofeng Desmond Zhong and Biswadip Dey and Amit Chakraborty},
booktitle={International Conference on Learning Representations},
year={2020}
}

@inproceedings{chen2018neural,
 author = {Chen, Ricky T. Q. and Rubanova, Yulia and Bettencourt, Jesse and Duvenaud, David K},
 booktitle = {Advances in Neural Information Processing Systems},
 pages = {},
 title = {Neural Ordinary Differential Equations},
 volume = {31},
 year = {2018}
}

@article{wang2021understanding,
author = {Wang, Sifan and Teng, Yujun and Perdikaris, Paris},
title = {Understanding and Mitigating Gradient Flow Pathologies in Physics-Informed Neural Networks},
journal = {SIAM Journal on Scientific Computing},
volume = {43},
number = {5},
pages = {A3055-A3081},
year = {2021},
doi = {10.1137/20M1318043}
}

@article{bischof2025multi,
  title={Multi-objective loss balancing for physics-informed deep learning},
  author={Bischof, Rafael and Kraus, Michael A},
  journal={Computer Methods in Applied Mechanics and Engineering},
  volume={439},
  pages={117914},
  year={2025},
  publisher={Elsevier}
}

@inproceedings{
loshchilov2017sgdr,
title={{SGDR}: Stochastic Gradient Descent with Warm Restarts},
author={Ilya Loshchilov and Frank Hutter},
booktitle={International Conference on Learning Representations},
year={2017}
}

@article{baydin2018automatic,
  author  = {Atilim Gunes Baydin and Barak A. Pearlmutter and Alexey Andreyevich Radul and Jeffrey Mark Siskind},
  title   = {Automatic Differentiation in Machine Learning: {A} Survey},
  journal = {Journal of Machine Learning Research},
  year    = {2018},
  volume  = {18},
  number  = {153},
  pages   = {1--43},
}

@ARTICLE{sprangers2015reinforcement,
  author={Sprangers, Olivier and Babuška, Robert and Nageshrao, Subramanya P. and Lopes, Gabriel A. D.},
  journal={IEEE Transactions on Cybernetics}, 
  title={Reinforcement Learning for Port-{H}amiltonian Systems}, 
  year={2015},
  volume={45},
  number={5},
  pages={1017-1027},
  doi={10.1109/TCYB.2014.2343194}}

@ARTICLE{ortega2008control,
  author={Ortega, R. and Van Der Schaft, A. and Castanos, F. and Astolfi, A.},
  journal={IEEE Transactions on Automatic Control}, 
  title={Control by Interconnection and Standard Passivity-Based Control of Port-{H}amiltonian Systems}, 
  year={2008},
  volume={53},
  number={11},
  pages={2527-2542},
  doi={10.1109/TAC.2008.2006930}}

@article{gheibi2020designing,
author = {A. Gheibi and A. R. Ghiasi and S. Ghaemi and M. A. Badamchizadeh},
title = {Designing of robust adaptive passivity-based controller based on reinforcement learning for nonlinear port-{H}amiltonian model with disturbance},
journal = {International Journal of Control},
volume = {93},
number = {8},
pages = {1754-1764},
year  = {2020},
publisher = {Taylor & Francis},
doi = {10.1080/00207179.2018.1532607}
}

@article{okura2020bayesian,
author = {Yuki Okura and Kenji Fujimoto and Ichiro Maruta and Akio Saito and Hidetoshi Ikeda},
title = {Bayesian Inference for Path Following Control of Port-{H}amiltonian Systems with Training Trajectory Data},
journal = {SICE Journal of Control, Measurement, and System Integration},
volume = {13},
number = {2},
pages = {40-46},
year  = {2020},
publisher = {Taylor & Francis},
doi = {10.9746/jcmsi.13.40}
}

@article{zhang2015dissipation,
  title={Dissipation Obstacle hampers Control-by-Interconnection Methodology},
  author={Zhang, Meng and Ortega, Romeo and Jeltsema, Dimitri and Su, Hongye},
  journal={IFAC-PapersOnLine},
  volume={48},
  number={13},
  pages={123--128},
  year={2015},
  publisher={Elsevier}
}

@article{sebastian2025physics,
  author={Sebastián, Eduardo and Duong, Thai and Atanasov, Nikolay and Montijano, Eduardo and Sagüés, Carlos},
  journal={IEEE Transactions on Robotics}, 
  title={Physics-Informed Multiagent Reinforcement Learning for Distributed Multirobot Problems}, 
  year={2025},
  volume={41},
  number={},
  pages={4499-4517},
  doi={10.1109/TRO.2025.3582836}}

@inproceedings{plaza2022total,
  title={Total energy shaping with neural interconnection and damping assignment-passivity based control},
  author={Plaza, Santiago Sanchez Escalonilla and Reyes-B{\'a}ez, Rodolfo and Jayawardhana, Bayu},
  booktitle={Learning for Dynamics and Control Conference},
  pages={520--531},
  year={2022},
  organization={PMLR}
}

@article{greydanus2019hamiltonian,
  title={{H}amiltonian neural networks},
  author={Greydanus, Samuel and Dzamba, Misko and Yosinski, Jason},
  journal={Advances in Neural Information Processing Systems},
  volume={32},
  year={2019}
}

@article{duong2024port,
  title={Port-{H}amiltonian neural {ODE} networks on {L}ie groups for robot dynamics learning and control},
  author={Duong, Thai and Altawaitan, Abdullah and Stanley, Jason and Atanasov, Nikolay},
  journal={IEEE Transactions on Robotics},
  volume={40},
  pages={3695--3715},
  year={2024},
  publisher={IEEE}
}

@article{dipersio2025stochastic,
  author  = {Di Persio, Luca and Ehrhardt, Matthias and Outaleb, Youness},
  title   = {Stochastic port-{H}amiltonian neural networks: {U}niversal approximation with passivity guarantees},
  journal = {arXiv preprint arXiv:2603.10078},
  year    = {2025}
}

@inproceedings{van2025learning,
  title={Learning Subsystem Dynamics in Nonlinear Systems via Port-{H}amiltonian Neural Networks},
  author={Van Otterdijk, {GJE} and Moradi, Sarvin and Weiland, Siep and T{\'o}th, Roland and Jaensson, {NO} and Schoukens, Maarten},
  booktitle={IEEE 64th Conference on Decision and Control (CDC)},
  pages={2071--2076},
  year={2025},
  organization={}
}

@article{beckers2023learning,
  title={Learning switching port-{H}amiltonian systems with uncertainty quantification},
  author={Beckers, Thomas and Jiahao, Tom Z and Pappas, George J},
  journal={IFAC-PapersOnLine},
  volume={56},
  number={2},
  pages={525--532},
  year={2023},
  publisher={Elsevier}
}

@inproceedings{beckers2022gaussian,
  title={Gaussian process port-{H}amiltonian systems: Bayesian learning with physics prior},
  author={Beckers, Thomas and Seidman, Jacob and Perdikaris, Paris and Pappas, George J},
  booktitle={IEEE 61st Conference on Decision and Control (CDC)},
  pages={1447--1453},
  year={2022},
  organization={}
}

@article{roth2025stable,
  title={Stable port-{H}amiltonian neural networks},
  author={Roth, Fabian J and Klein, Dominik K and Kannapinn, Maximilian and Peters, Jan and Weeger, Oliver},
  journal={arXiv preprint arXiv:2502.02480},
  year={2025}
}

@article{dai2021lyapunov,
  title={Lyapunov-stable neural-network control},
  author={Dai, Hongkai and Landry, Benoit and Yang, Lujie and Pavone, Marco and Tedrake, Russ},
  journal={arXiv preprint arXiv:2109.14152},
  year={2021}
}

@article{chang2019neural,
  title={Neural {L}yapunov Control},
  author={Chang, Ya-Chien and Roohi, Nima and Gao, Sicun},
  journal={Advances in Neural Information Processing Systems},
  volume={32},
  year={2019}
}

@article{banerjee2025survey,
  title={A survey on physics-informed reinforcement learning: {R}eview and open problems},
  author={Banerjee, Chayan and Nguyen, Kien and Fookes, Clinton and Raissi, Maziar},
  journal={Expert Systems with Applications},
  volume={287},
  pages={128166},
  year={2025},
  publisher={Elsevier}
}

@article{van2014port,
  title={Port-{H}amiltonian systems theory: An introductory overview},
  author={Van Der Schaft, Arjan and Jeltsema, Dimitri},
  journal={Foundations and Trends{\textregistered} in Systems and Control},
  volume={1},
  number={2-3},
  pages={173--378},
  year={2014},
  publisher={Emerald Publishing Limited Boston—Delft}
}

@article{ortega2002putting,
  title={Putting energy back in control},
  author={Ortega, Romeo and Van Der Schaft, Arjan J and Mareels, Iven and Maschke, Bernhard},
  journal={IEEE Control Systems Magazine},
  volume={21},
  number={2},
  pages={18--33},
  year={2001},
}

@inproceedings{xu2025learning,
  title={Learning Neural {K}oopman Operators with Dissipativity Guarantees},
  author={Xu, Yuezhu and Sivaranjani, S and Gupta, Vijay},
  booktitle={IEEE 64th Conference on Decision and Control (CDC)},
  pages={2064--2070},
  year={2025},
  organization={}
}

@article{sivaranjani2025control,
  title={Control-Oriented System Identification: Classical, Learning, and Physics-Informed Approaches},
  author={Sivaranjani, S and Shi, Yuanyuan and Atanasov, Nikolay and Duong, Thai and Feng, Jie and Martin, Tim and Xu, Yuezhu and Gupta, Vijay and Allg{\"o}wer, Frank},
  journal={arXiv preprint arXiv:2512.06315},
  year={2025}
}

@article{mason2023learning,
  title={Learning to predict {3D} rotational dynamics from images of a rigid body with unknown mass distribution},
  author={Mason, Justice J and Allen-Blanchette, Christine and Zolman, Nicholas and Davison, Elizabeth and Leonard, Naomi Ehrich},
  journal={Aerospace},
  volume={10},
  number={11},
  pages={921},
  year={2023},
  publisher={MDPI}
}

@article{zhong2020unsupervised,
  title={Unsupervised learning of {Lagrangian} dynamics from images for prediction and control},
  author={Zhong, Yaofeng Desmond and Leonard, Naomi},
  journal={Advances in Neural Information Processing Systems},
  volume={33},
  pages={10741--10752},
  year={2020}
}

\end{document}